\newcommand{\1}[1]{{\normalfont \ensuremath{#1_{1}}}} %
\newcommand{\G}{\ensuremath{\mathcal{G}}}
\newcommand{\mpd}{\textsc{Mutual Planar Duality}\xspace}
\newcommand{\threepart}{\textsc{3-Partition}\xspace}
\DeclareMathOperator{\skel}{skel}
\DeclareMathOperator{\corr}{corr}
\newcommand{\eps}{\varepsilon}
\newtheorem{theorem}{Theorem}
\newtheorem{lemma}{Lemma}
\newtheorem{corollary}{Corollary}
\begin{document}


\title{Testing Mutual Duality of Planar Graphs}

\author{Patrizio Angelini$^1$, Thomas Bl\"{a}sius$^2$, and Ignaz Rutter$^2$
\\\\
\emph{$^1$Universit\`a  Roma Tre, Italy}\\
{\small angelini@dia.uniroma3.it}\\
\emph{$^2$Faculty of Informatics, Karlsruhe Institute of Technology (KIT),
Germany}\\
{\small \{thomas.blaesius,rutter\}@kit.edu}}

\date{}

\maketitle

\begin{abstract} 
  We introduce and study the problem \mpd, which asks for two planar
  graphs~$G_1$ and~$G_2$ whether~$G_1$ can be embedded such that its
  dual is isomorphic to~$G_2$.  Our algorithmic main result is an
  NP-completeness proof for the general case and a linear-time
  algorithm for biconnected graphs.

  To shed light onto the combinatorial structure of the duals of a
  planar graph, we consider the \emph{common dual relation}~$\sim$,
  where $G_1 \sim G_2$ if and only if they have a common dual.
  While~$\sim$ is generally not transitive, we show that the
  restriction to biconnected graphs is an equivalence relation.  In
  this case, being dual to each other carries over to the equivalence
  classes, i.e., two graphs are dual to each other if and only if any
  two elements of their respective equivalence classes are dual to
  each other.
  To achieve the efficient testing algorithm for \mpd on biconnected
  graphs, we devise a succinct representation of the equivalence class
  of a biconnected planar graph.  It is similar to SPQR-trees and
  represents exactly the graphs that are contained in the equivalence
  class.  The testing algorithm then works by testing in linear time
  whether two such representations are isomorphic.

  We note that a special case of \mpd is testing whether a graph~$G$
  is self-dual.  Our algorithm handles the case where~$G$ is
  biconnected and our NP-hardness proof extends to testing
  self-duality of general planar graphs and also to testing map
  self-duality, where a graph~$G$ is map self-dual if it admits a
  planar embedding~$\mathcal G$ such that~$G^\star$ is isomorphic
  to~$G$, and additionally the embedding induced by~$\mathcal G$
  on~$G^\star$ is~$\mathcal G$.
\end{abstract}

\setcounter{page}{0}
\thispagestyle{empty}
\newpage

\section{Introduction}
\label{sec:introduction}

Let~$G$ be a planar graph with planar embedding~$\G$ and let~$F$
be the set of faces of~$\G$.  The \emph{dual} of~$G$ with respect
to~$\G$ is the graph~$G^\star = (F,E^\star)$, where~$E^\star = \{
e^\star \mid e \in E\}$ and~$e^\star$ denotes the edge connecting the
two faces incident to~$e$ in~$\G$.  Thus~$G^\star$
models the adjacencies of faces of~$G$ in the embedding~$\G$.  Note
that the circular order of edges around faces in~$\G$ naturally induces a
planar embedding~$\G^\star$ on~$G^\star$ and that the dual of~$G^\star$
with respect to~$\G^\star$ is~$G$.

We consider the following problem, that we call \mpd.  Given two planar
graphs~$G_1$ and~$G_2$, is it possible to find an embedding~$\G_1$
of~$G_1$ such that the dual $G_1^\star$ of~$G_1$ with respect
to~$\G_1$ is isomorphic to~$G_2$?  If $G_1$ is triconnected it has a
fixed planar embedding~\cite{w-cgcg-32} and thus the problem \mpd
reduces to testing graph isomorphism for planar graphs, which can be
done in linear time due to Hopcroft and Wong~\cite{hw-ltaipg-74}.
Observe that biconnectivity and triconnectivity of a planar graph is
invariant under dualization~\cite{t-cm-66}.  For non-triconnected
planar graphs \mpd is more complicated since changing the embedding of
$G_1$ influences the structure of its dual graph.  In fact, we show
that \mpd is NP-complete in general. 

On the other hand, for biconnected planar graphs we provide a
linear-time algorithm solving \mpd that is based on the definition of
a new data structure that we call \emph{dual SPQR-tree} in analogy to
the SPQR-tree~\cite{dt-omtc-96,dt-opt-96}. As SPQR-trees allow to
succinctly represent and efficiently handle all the planar embeddings
of a biconnected planar graph, the dual SPQR-trees, together with a
newly-defined set of operations, allow to succinctly represents and
efficiently handle all the dual graphs of a biconnected planar graph.
This data structure has an interesting implication on the structure of
the dual graphs of a biconnected planar graph.  Namely, consider the
\emph{common dual relation} $\sim$, where $G_1 \sim G_2$ if and only
if they have a common dual graph.  We show that, $\sim$ is not
transitive on the set of connected planar graphs.  However, it follows
from the dual SPQR-tree that $\sim$ is an equivalence relation on the
set of biconnected planar graphs.  In particular, the graphs
represented by a dual SPQR-tree form an equivalence class.  Thus,
testing \mpd reduces to testing whether two dual SPQR-trees represent
the same equivalence class.

We believe that this new data structure can be successfully used to
efficiently solve other related problems. In fact, in many applications it
is desirable to find an embedding of a given planar graph that optimizes
certain criteria, and such criteria can often be naturally described in
terms of the dual graph with respect to the chosen embedding. For example,
Bienstock and Monma~\cite{bm-cepgmcdm-90}, and Angelini et
al.~\cite{abp-mepg-11} seek for an embedding of a given planar graph
minimizing the largest distance of internal faces to the external face. In
terms of the dual graph this corresponds to minimizing the largest distance
of a vertex to all other vertices. Hence, for problems of this kind it
might be useful to work directly with a representation of all dual graphs,
that can be given by the dual SPQR-trees, instead of taking the detour over
a representation of all planar embeddings, that is given by the original
SPQR-trees.

We finally remark that the \mpd problem we introduce in this paper is a
generalization of the self-duality of planar graphs~\cite{sc-csdg-92}. A
graph $G$ is \emph{graph self-dual} if it admits an embedding such that its
dual $G^\star$ is isomorphic to $G$. We call the corresponding decision
problem {\sc Graph Self-Duality}. A stronger form of self-duality can
be defined as follows. A graph $G$ is \emph{map self-dual}~\cite{ss-sdg-96}
if and only if $G$ has an embedding $\G$ such that there exists an
isomorphism from $G$ to its dual graph $G^\star$ that preserves the
embedding $\G$. The corresponding decision problem is called {\sc
Map Self-Duality}. Note that, since triconnected planar graphs have a
unique planar embedding, {\sc Graph Self-Duality} and {\sc Map
Self-Duality} are equivalent for this class of graphs.
Servatius and Servatius~\cite{ss-sdg-96} show the
existence of
biconnected planar graphs that are graph self-dual but not map
self-dual. Servatius and Christopher~\cite{sc-csdg-92} show how to
construct self-dual graphs from given planar graphs.  Archdeacon and
Richter~\cite{ar-ccsdsp-92} give a set of constructions for
triconnected self-dual graphs and show that every such graph can be
constructed in this way.  To the best of our knowledge the
computational complexity of testing {\sc Map} or {\sc Graph Self-Duality}
is open. Since {\sc Graph Self-Duality} is a special case of \mpd (simply
set $G_1 = G_2 = G$), our algorithm solving \mpd in linear time can be used
to solve {\sc Graph Self-Duality} in linear time when $G$ is biconnected. 
Moreover, our construction showing NP-hardness of \mpd for general
instances extends to {\sc Map} and {\sc Graph Self-Duality}.



\paragraph{Outline.}

In Section~\ref{sec:complexity} we show that \mpd is NP-complete, even
if both input graphs are required to be simple.  With a similar
construction we can show that {\sc Map Self-Duality} and {\sc Graph
  Self-Duality} are NP-complete in general.  To solve \mpd efficiently
for biconnected graphs, we first describe decomposition trees as a
generalization of SPQR-trees in Section~\ref{sec:spqr-tree}.  In
Section~\ref{sec:succ-repr-all-duals} we describe the dual SPQR-tree
and show that it succinctly represents all dual graphs of a
biconnected planar graph.  We consider the common dual relation in
Section~\ref{sec:equivalence-relation} and give a counter example
showing that $\sim$ is not transitive on the set of connected planar
graphs.  On the other hand, we show that it follows from the dual
SPQR-tree that $\sim$ is an equivalence relation on the set of
biconnected planar graphs.  This implies that solving \mpd is
equivalent to testing whether two dual SPQR-trees represent the same
equivalence class.  In Section~\ref{sec:solv-mpd-biconn} we show that
this can be further reduced to testing graph isomorphism of two planar
graphs, which leads to a linear-time algorithm for \mpd, including
{\sc Graph Self-Duality} as a special case.

\section{Complexity}
\label{sec:complexity}

In this section we first show that \mpd is NP-complete by a reduction
from \threepart.  Then we show that the resulting instances of \mpd
can be further reduced to equivalent instances of {\sc Map} and {\sc
  Graph Self-Duality}.  An instance~$(A,B)$ of \threepart consists of
a positive integer~$B$ and a set~$A=\{a_1,\dots,a_{3m}\}$ of~$3m$
integers with~$B/4 < a_i < B/2$ for $i=1,\dots,3m$.  The question is
whether~$A$ admits a partition~$\cal A$ into a set of triplets such
that for each triplet $X \in \cal A$ we have~$\sum_{x \in X} x = B$.
The problem \threepart is strongly NP-hard~\cite{gj-cigtnpc-79}, i.e.,
it remains NP-hard even if~$B$ is bounded by a polynomial in~$m$.

\begin{theorem}
  \label{thm:mpd-np-compl}
  \mpd is NP-complete, even if both graphs are simple.
\end{theorem}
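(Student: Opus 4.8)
The plan is to establish membership in NP first and then give a polynomial reduction from \threepart. For membership, a certificate consists of a rotation system describing an embedding $\G_1$ of $G_1$ together with an explicit vertex bijection between the dual $G_1^\star$ and $G_2$. From the rotation system one computes $G_1^\star$ in polynomial time and then checks in polynomial time that the supplied bijection is an isomorphism onto $G_2$; since a rotation system has size linear in $G_1$, this is a polynomial certificate. Because \threepart is strongly NP-hard, I may assume $B$ is bounded by a polynomial in~$m$, which lets me encode the integers $a_i$ and~$B$ in unary, as lengths of cycles, while keeping the instance size polynomial.

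For hardness, the key observation is that the vertices of $G_1^\star$ are the faces of $\G_1$ and that the degree of such a vertex equals the length of the corresponding face. The only embedding freedom worth exploiting lies below biconnectivity (the biconnected case being polynomial), namely the nesting and cyclic ordering of blocks around a cut vertex. I therefore build $G_1$ as a \emph{flower} around a central cut vertex~$c$: bin-cycles $R_1,\dots,R_m$ and item-cycles $C_1,\dots,C_{3m}$, all pairwise sharing only~$c$, where $|C_i|$ is proportional to $a_i$ and the $|R_j|$ share a common value. In any embedding each item-cycle is drawn inside a face incident to~$c$, and nesting item-cycles $C_{i_1},C_{i_2},C_{i_3}$ inside bin-cycle $R_j$ creates a face whose boundary walk traverses $R_j$ together with the three item-cycles, hence of length $|R_j| + |C_{i_1}| + |C_{i_2}| + |C_{i_3}|$. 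Choosing which items nest into which bin is exactly an assignment of items to bins, and such a bin-face attains the uniform target length if and only if its items sum to~$B$; the hypothesis $B/4 < a_i < B/2$ forces each bin to receive exactly three items, since two items sum to strictly less than~$B$ and four to strictly more. I then define $G_2$ to be the graph produced by the \emph{balanced} embedding, in which every bin holds three items: it has $m$ bin-vertices of the common degree $|R_j|+B$, the $3m$ item-interior vertices of degrees proportional to the~$a_i$, and one outer vertex, wired together exactly as the balanced dual prescribes.

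The forward direction is then routine: a valid partition $\mathcal A$ yields the balanced nested embedding, whose dual is isomorphic to $G_2$ by construction. The main obstacle is the backward direction, which requires ruling out spurious embeddings: an arbitrary embedding could nest items inside items, place items in the outer region, or distribute unequal numbers of items among the bins. The lever here is that the multiset of face lengths of $\G_1$ must equal the degree multiset of $G_2$, while the total face length is pinned to the invariant $2|E(G_1)|$. Together with the three-per-bin rigidity forced by $B/4 < a_i < B/2$, these constraints leave the balanced nesting as the only realization of the prescribed multiset. To make this argument clean I attach small rigidifying gadgets that fix the incidences of the outer face and pin the bin-cycles so that they are mutually non-nesting, leaving the assignment of items to bins as the only remaining freedom.

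Finally I must ensure that both $G_1$ and $G_2$ are simple. The dual of a simple graph can fail to be simple in the presence of faces of length~$2$, which create parallel dual edges, or of bridges, which create dual self-loops. I therefore scale all $a_i$ and~$B$ by a constant and add a constant to every cycle length, guaranteeing that all faces have length at least three and that no two faces share more than one edge, so that $G_1^\star$ and hence $G_2$ are simple; since $B$ is only polynomially large, this preserves polynomiality of the reduction.
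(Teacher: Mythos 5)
Your NP-membership argument is fine, and the choice of \threepart with the unary encoding licensed by strong NP-hardness matches the paper. The reduction itself, however, has a fatal circularity: your $G_2$ is not computable from the \threepart instance. In your flower construction, each item cycle $C_i$ dualizes to a vertex of degree $|C_i|$ joined by $|C_i|$ parallel edges to the dual vertex of the specific bin face it is nested in, so the dual graph \emph{records the actual triple assigned to each bin} (via the multiset of edge multiplicities at each bin vertex). Hence ``the graph produced by the balanced embedding'' is ill-defined: it depends on which valid partition realizes it, different valid partitions can yield non-isomorphic duals, and writing $G_2$ down in polynomial time would require already solving the \threepart instance. The paper's construction is engineered precisely to erase this information: items are stars (trees) attached to the hub $u$ of a wheel, every tree edge is a bridge, and bridges dualize to \emph{loops} at a single face vertex. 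The dual therefore forgets which items share a face and retains only the count of loops, so $G_2$ can be fixed in advance as a wheel with exactly $B$ loops at each rim vertex, independently of any solution.

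Two further steps do not go through as stated. First, the elimination of spurious embeddings is entirely deferred to ``small rigidifying gadgets'' that are never constructed; in your flower a single cutvertex permits items nested inside items, bins inside bins, and placements in the outer region, and the claim that the face-length multiset plus $B/4 < a_i < B/2$ excludes all of these is asserted, not proved. The paper needs no such gadgets: the wheel is triconnected with a unique embedding, the trees attach only at $u$ and so can only be placed in the $m$ hub faces, and trees enclose no face, so nesting is impossible--—the item-to-face assignment is provably the \emph{only} freedom. Second, your simplicity fix fails for your own construction: the two faces on either side of an item cycle $C_i$ share all $|C_i|$ of its edges, so the dual has parallel edges of multiplicity about $a_i$ no matter how you scale the $a_i$ or lengthen the cycles; preventing parallel dual edges would require adding internal structure to the cycles, which would wreck your degree bookkeeping. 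The paper instead embraces the multigraph dual and only afterwards replaces each bridge of $G_1$ and each loop of $G_2$ by a 4-wheel, an operation that preserves mutual duality and yields simple graphs.
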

\begin{proof}
  Clearly, \mpd is in~NP as we can guess an embedding for graph~$\1G$
  and then check in polynomial time whether the corresponding dual is
  isomorphic to~$\2G$.

  To show NP-hardness we give a reduction from \threepart.  We first
  give a construction containing loops, afterwards we show how to get
  rid of them.  Let~$(A,B)$ be an instance of \threepart with~$|A| =
  3m$.  The graph~$G_1$ contains a wheel of size $m$, i.e., a cycle
  $v_1, \dots, v_m$ together with a center $u$ connected to each
  $v_i$.  Additionally, for each element~$a_i \in A$ we create a star
  $T_i$ with~$a_i -1$ leaves and connect its center to~$u$; see
  Figure~\ref{fig:reduction}(a).  The graph~$\2G$ is a wheel of
  size~$m$ along with~$B$ loops at every vertex except for the center;
  see Figure~\ref{fig:reduction}(b).  We now claim that~$\1G$
  and~$\2G$ form a \textsc{Yes}-instance of \mpd if and only
  if~$(A,B)$ is a \textsc{Yes}-instance of \threepart.

  \begin{figure}[tb]
    \centering
    \includegraphics[page=4]{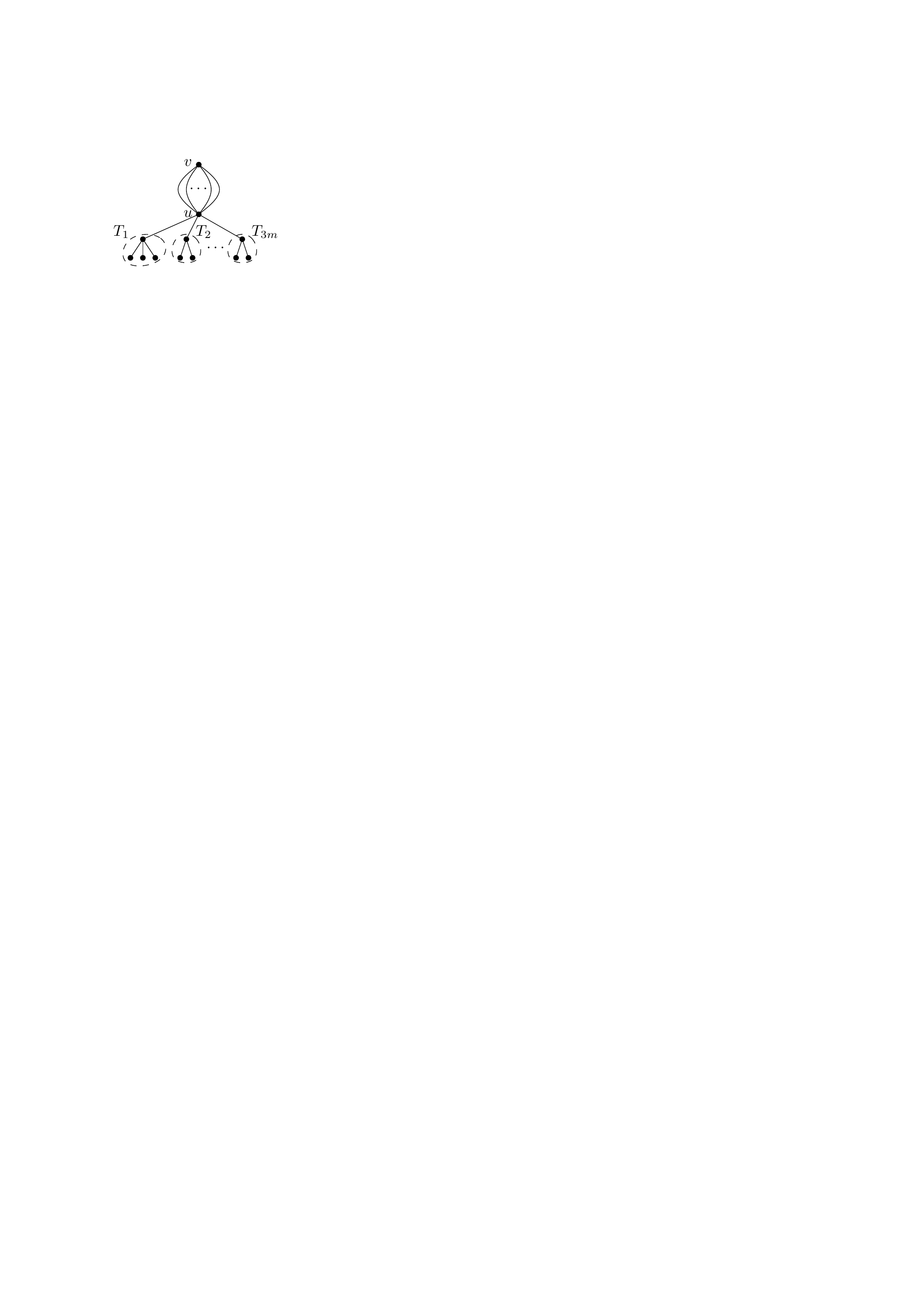}
    \caption{The graphs~$\1G$ (a) and~$\2G$ (b) of the reduction from
      \threepart.  (c) Embedding a tree~$T_i$ inside a face~$f$
      creates~$a_i$ loops at the corresponding dual vertex.  (d)
      Bridges and corresponding loops can be replaced by small
      graphs.}
    \label{fig:reduction}
  \end{figure}

  Suppose that there exists a partition $\cal A$ of $A$.  The
  embedding of the wheel in $G_1$ is fixed and it has exactly $m$
  faces incident to the wheels center $u$.  The remaining degree of
  freedom is to decide the embedding of the trees~$T_i$ into these $m$
  faces.  For each triplet $X = \{a_i,a_j, a_k\} \in \cal A$ we pick a
  distinct such face and embed~$T_i,T_j$ and~$T_k$ into it.  Call the
  resulting embedding~$\1\G$ and consider the dual~$G_1^\star$ with
  respect to~$\1\G$.  The wheel of~$\1G$ determines a wheel of
  size~$m$ in~$G_1^\star$.  Consider a tree~$T_i$ that is embedded in
  a face~$f$.  Since~$T_i$ contains~$a_i$ bridges, which are all
  embedded in~$f$, the corresponding vertex of~$G_1^\star$ has $a_i$
  loops; see Figure~\ref{fig:reduction}(c).  Due to the construction
  each face contains exactly three trees with a total of~$B$ bridges.
  Thus $G_1^\star$ is isomorphic to~$\2G$.

  Conversely, assume that we have an embedding~$\1\G$ such that the
  dual~$G_1^\star$ of~$\1G$ with respect to~$\1\G$ is isomorphic
  to~$\2G$.  Again, the wheel in $G_1$ forms $m$ faces incident to
  $u$, and since~$G_1^\star$ is isomorphic to~$\2G$, the trees must be
  embedded such that each face contains exactly~$B$ bridges.  Since
  embedding~$T_i$ inside a face~$f$ puts~$a_i$ bridges into~$f$ and we
  have~$B/4 < a_i < B/2$ each face must contain exactly three trees.
  Thus the set of triplets determined by trees that are embedded in
  the same faces form a solution to \threepart.
  
  Clearly, the transformation can be computed in polynomial time, and
  thus \mpd is NP-hard.  Moreover, the graph $G_2$ can be made simple
  ($G_1$ is already simple) by replacing each bridge in $G_1$ and each
  loop in $G_2$ with a 4-wheel as depicted in
  Figure~\ref{fig:reduction}(d).  The resulting graphs $G_1'$ and
  $G_2'$ are obviously dual to each other if and only if $G_1$ and
  $G_2$ are dual to each other.  Moreover, $G_1'$ and $G_2'$ are
  simple, which concludes the proof.
\end{proof}

In the following we show how the above construction can be used to
show NP-completeness for {\sc Map} and {\sc Graph Self-Duality}.  To
this end, we use the \emph{adhesion} operation introduced by Servatius
and Christopher~\cite{sc-csdg-92}.  Let $v$ be a vertex of $G$
incident to a face $f$.  Then the adhesion of $G$ and its dual
$G^\star$ (with respect to $v$ and $f$) is obtained by identifying $v$
in $G$ and $f^\star$ in $G^\star$ with each other.  Servatius and
Christopher~\cite{sc-csdg-92} show that the adhesion of a plane graph
and its dual is graph self-dual.  Moreover, they implicitly show that
this adhesion is even map self-dual, although they do not mention it
explicitly.  To show the following theorem we essentially transform
the instance of \mpd consisting of the two graphs $G_1$ and $G_2$
described in the proof of Theorem~\ref{thm:mpd-np-compl} into an
equivalent instance of {\sc Map} and {\sc Graph Self-Duality} by
forming the adhesion of $G_1$ and $G_2$.

\begin{theorem}
  \label{thm:self-dual-NP-compl}
  {\sc Graph Self-Duality} and {\sc Map Self-Duality} are NP-complete.
\end{theorem}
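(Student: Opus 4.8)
The plan is to reduce from the \mpd instance $(G_1,G_2)$ built in the proof of Theorem~\ref{thm:mpd-np-compl}, packaging the two graphs into a single graph whose self-duality encodes the answer. Membership in~NP is immediate for both problems: one guesses an embedding~$\mathcal G$ of the input graph $G$ together with a bijection between the vertices of $G$ and the faces of $\mathcal G$, and checks in polynomial time that this bijection is a graph isomorphism $G \to G^\star$; for {\sc Map Self-Duality} one additionally checks that it carries $\mathcal G$ to the induced dual embedding.

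For hardness I would form the \emph{adhesion} $H$ of $G_1$ and $G_2$ in the sense of Servatius and Christopher~\cite{sc-csdg-92}: fix a vertex $v$ of $G_1$ incident to a face $f$ in a reference embedding and identify $v$ with a suitably chosen vertex $w$ of $G_2$, where $w$ is meant to play the role of the dual vertex $f^\star$. The claim is that $H$ is self-dual if and only if $(G_1,G_2)$ is a \textsc{Yes}-instance, and since $H$ is computable in polynomial time this establishes hardness for both problems at once.

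For the forward direction, assume $(G_1,G_2)$ is a \textsc{Yes}-instance, so $G_1$ has an embedding $\mathcal G_1$ with $G_1^\star \cong G_2$. Choosing $v$, $f$ and $w$ so that the isomorphism $G_1^\star \cong G_2$ sends $f^\star$ to $w$, the graph $H$ becomes isomorphic to the adhesion of the plane graph $G_1$ and its own dual $G_1^\star$. By the result of Servatius and Christopher this adhesion is graph self-dual, and, as they implicitly show, even map self-dual; hence $H$ is a \textsc{Yes}-instance of {\sc Map Self-Duality} (and therefore also of {\sc Graph Self-Duality}).

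The converse is the crux. Suppose $H$ is graph self-dual, witnessed by an embedding $\mathcal H$ and an isomorphism $\phi\colon H \to H^\star$; I must recover an embedding of $G_1$ whose dual is isomorphic to $G_2$. The adhesion vertex is a cut vertex splitting $H$ into a $G_1$-block and a $G_2$-block, and these two blocks are structurally distinguishable because $G_1$ carries the ``primal-type'' gadgets (the trees, resp.\ $4$-wheels, encoding the $a_i$) while $G_2$ carries the ``dual-type'' gadgets (the loops, resp.\ $4$-wheels, encoding $B$), and dualization turns one type into the other. The task is to show that $\phi$ must respect this block decomposition and send the $G_1$-block onto the dual of the $G_2$-block (and vice versa), rather than scrambling the two halves or violating the cut structure; once this is established, the restriction of $\mathcal H$ to the $G_1$-block is an embedding of $G_1$ with dual isomorphic to $G_2$, closing the backward implication. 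Establishing this rigidity --- that every self-duality of $H$ preserves the cut vertex and swaps the two lobes in the intended primal/dual fashion --- is the step I expect to be the main obstacle, and it is where the careful, non-interchangeable design of the gadgets from Theorem~\ref{thm:mpd-np-compl} does the real work. Chaining the two directions (a \textsc{Yes}-instance yields a map self-dual $H$, which is a fortiori graph self-dual, which by the converse forces a \textsc{Yes}-instance) proves both {\sc Graph Self-Duality} and {\sc Map Self-Duality} NP-complete simultaneously.
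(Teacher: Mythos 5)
Your outline follows exactly the paper's strategy (form the adhesion of $G_1$ and $G_2$ in the sense of Servatius and Christopher~\cite{sc-csdg-92}; forward direction via their map self-duality result; chain \textsc{Yes} for \mpd $\Rightarrow$ map self-dual $\Rightarrow$ graph self-dual $\Rightarrow$ \textsc{Yes} for \mpd), and your NP-membership and forward direction are fine. But there is a genuine gap: the backward direction, which you yourself flag as ``the main obstacle,'' is only announced, never proved. Asserting that the gadgets are ``non-interchangeable'' and that every self-duality must ``respect the block decomposition'' is precisely the content of the hard direction, not an argument for it. The paper closes this concretely. With $v = u$ the adhesion vertex and $f$ the unique face incident to both lobes, one first observes that $v$ is the \emph{only} vertex of $G$ that is simultaneously a cutvertex and the center of a wheel of size $m$, and that $f$ is the only cutvertex of $G^\star$ that can be the center of a wheel of size $m$; hence any isomorphism $\varphi \colon G \to G^\star$ must map $v$ to $f$. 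One then compares the blocks incident to $v$ (a block with degree~3 at $v$ and $B$ loops, both from $G_2$; a wheel of size $m$ with center $v$ and $3m$ attached trees, both from $G_1$) with the blocks incident to $f$ in $G^\star$ (a block with degree~3 at $f$ in $G_1^\star$, loops stemming from the trees, a wheel of size $m$ with center $f$ in $G_2^\star$, bridges stemming from the loops of $G_2$). This block-level matching forces $\varphi$ to map all vertices of $G_1$ to $G_2^\star$ and all vertices of $G_2$ to $G_1^\star$, which is what yields the \textsc{Yes}-instance of \mpd. Without an argument of this kind (uniqueness of the cutvertex image plus the block inventory) your chain of implications does not close, so the NP-hardness claim is unproven as written.

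A second, smaller defect: your reduction cannot choose $v$, $f$ and $w$ ``so that the isomorphism $G_1^\star \cong G_2$ sends $f^\star$ to $w$'' --- the reduction must output $H$ before knowing whether such an isomorphism exists. The choices have to be fixed a priori, and they matter: the paper takes $v = u$ (the wheel center of $G_1$) and $w$ an arbitrary non-center vertex of the wheel in $G_2$. This works because $u$ is incident to every face of $G_1$ except the one dual to the wheel center of $G_1^\star$, so in the \textsc{Yes} case the vertex--face incidence required by the adhesion construction is automatically satisfied, and because all non-center vertices of $G_2$ are equivalent under an automorphism of $G_2$, so the particular choice of $w$ is immaterial. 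Note also that the uniqueness arguments in the rigidity step above depend on $v$ being $u$; an adhesion at a vertex other than $u$ would break them. With these repairs your sketch becomes the paper's proof, but as it stands both the construction and its correctness proof are incomplete at the decisive points.
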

\begin{proof}
  Clearly, {\sc Graph Self-Duality} ({\sc Map Self-Duality}) is in NP
  as we can guess an embedding for $G$ together with a bijection
  between the vertices of $G$ and the vertices of $G^\star$ and then
  check in polynomial time whether this bijection is an isomorphism
  (that preserves the embedding).

  Let $G_1$ and $G_2$ form an instance of \mpd obtained from an
  instance of \threepart as described in the proof of
  Theorem~\ref{thm:mpd-np-compl}.  Let $G$ be the graph obtained from
  $G_1$ and $G_2$ by identifying a vertex that is not the center of
  the wheel in $G_2$ with the vertex $u$ in $G_1$.  In the following
  we consider $G$ as an instance of {\sc Graph Self-Duality} and {\sc
    Map Self-Duality}.  We claim the following.
  \begin{compactdesc}
  \item[Claim 1] If $G$ is a {\sc Yes}-instance of {\sc Map
      Self-Duality}, it is a {\sc Yes}-instance of {\sc Graph
      Self-Duality}.
  \item[Claim 2] If $G_1$ and $G_2$ form a {\sc Yes}-instance of \mpd,
    then $G$ is a {\sc Yes}-instance of {\sc Map Self-Duality}.
  \item[Claim 3] If $G$ is a {\sc Yes}-instance of {\sc Graph
      Self-Duality}, then $G_1$ and $G_2$ form a {\sc Yes}-instance of
    \mpd.
  \end{compactdesc}
  The three claims together show that the instance $G_1$ and $G_2$ of
  \mpd, the instance $G$ of {\sc Graph Self-Duality} and the instance
  $G$ of {\sc Map Self-Duality} are equivalent.

  Claim~1 is clear since being map self-dual is a stricter requirement
  then being graph self-dual.  For Claim~2 assume that $G_1$ and $G_2$
  form a {\sc Yes}-instance of \mpd, that is $G_1$ and $G_2$ admit
  embeddings such that they are dual to each other.  As the vertex $u$
  is incident to all faces in $G_1$ except for the face forming the
  center of the wheel in $G_1^\star$, it is in particular incident to
  the face dual to the vertex in $G_2$ chosen for the adhesion.  Thus
  it follows from the results by Servatius and
  Christopher~\cite{sc-csdg-92} that the adhesion $G$ of $G_1$ and
  $G_2$ is map self-dual.

  It remains to show Claim~3.  Let $G^\star$ be the dual graph of $G$
  with respect to a fixed embedding and let $\varphi \colon V(G)
  \longrightarrow V(G^\star)$ be a graph isomorphism between $G$ and
  $G^\star$.  As $G$ is the adhesion of $G_1$ and $G_2$ there is a
  unique vertex $v$ in $G$ belonging to $G_1$ and $G_2$, and a unique
  face $f$ incident to both graph $G_1$ and $G_2$.  Since $v$ was
  chosen to be $u$ in $G_1$, it is the only vertex in $G$ that is a
  cutvertex and the center of a wheel of size $m$.  Moreover, $f$ is
  the only cutvertex in $G^\star$ that can be the center of a wheel of
  size $m$.  Thus $\varphi$ has to map $v$ to $f$.  The blocks
  incident to $v$ are a block with degree~3 at $v$ stemming from
  $G_2$, $B$ loops also stemming from $G_2$, a block consisting of a
  wheel of size $m$ with center $v$ stemming from $G_1$ and $3m$
  attached trees stemming from $G_1$.  Similar the vertex $f$ in
  $G^\star$ is incident to a block having degree~3 at $f$ contained in
  $G_1^\star$, a set of loops stemming from the trees in $G_1$ (the
  number of loops depends on the embedding), a wheel of size $m$ with
  center $f$ contained in $G_2^\star$ and a set of bridges stemming
  from the loops at $G_2$.  Thus, $\varphi$ has to map all vertices in
  $G_1$ to vertices in $G_2^\star$ and all vertices in $G_2$ to
  vertices in $G_1^\star$.  This directly shows that $G_1$ and $G_2$
  form a {\sc Yes}-instance of \mpd, which concludes the proof.
\end{proof}

\section{Decomposition Trees and the SPQR-Tree}
\label{sec:spqr-tree}

A graph is \emph{connected} if there exists a path between any pair of
vertices.  A \emph{separating $k$-set} is a set of $k$ vertices whose
removal disconnects the graph.  Separating 1-sets and 2-sets are
\emph{cutvertices} and \emph{separation pairs}, respectively.  A
connected graph is \emph{biconnected} if it does not have a cut vertex
and \emph{triconnected} if it does not have a separation pair.  The
maximal biconnected components of a graph are called \emph{blocks}.

In the following we consider \emph{decomposition trees} of biconnected
planar graphs containing the \emph{SPQR-trees} introduced by Di
Battista and Tamassia~\cite{dt-omtc-96,dt-opt-96} as a special case.
Let $G$ be a planar biconnected graph and let $\{s, t\}$ be a
\emph{split pair}, that is either a separation pair or a pair of
adjacent vertices.  Let further $H_1$ and $H_2$ be two subgraphs of
$G$ such that $H_1 \cup H_2 = G$ and $H_1 \cap H_2 = \{s, t\}$.
Consider the tree $\mathcal T$ consisting of two nodes $\mu_1$ and
$\mu_2$ associated with the graphs $H_1 + (s, t)$ and $H_2 + (s, t)$,
respectively.  For each node $\mu_i$, the graph $H_i + (s, t)$
associated with it is the \emph{skeleton} of $\mu_i$, denoted by
$\skel(\mu_i)$, and the special directed edge $(s, t)$ is called
\emph{virtual edge}. The edge connecting the nodes $\mu_1$ and $\mu_2$
in $\mathcal T$ associates the virtual edge $\eps_1 = (s, t)$ in
$\skel(\mu_1)$ with the virtual edge $\eps_2 = (s, t)$ in
$\skel(\mu_2)$; we say that $\eps_1$ is the \emph{twin} of $\eps_2$
and vice versa.  Moreover, we say that $\eps_1$ in $\skel(\mu_1)$
\emph{corresponds} to the neighbor $\mu_2$ of $\mu_1$.  This can be
expressed as a bijective map $\corr_\mu \colon E(\skel(\mu))
\longrightarrow N(\mu)$ for each node $\mu$, where $E(\skel(\mu))$ and
$N(\mu)$ denote the set of edges in $\skel(\mu)$ and the set of
neighbors of $\mu$ in $\mathcal T$, respectively.  In the example
above we have $\corr(\eps_1) = \mu_2$ and $\corr(\eps_2) = \mu_1$ (the
index at $\corr$ is omitted as it is clear from the context).

The above described procedure is called \emph{decomposition} and can
of course be applied further to the skeletons of the nodes of
$\mathcal T$, leading to a larger tree with smaller skeletons.  The
decomposition can be undone by \emph{contracting} an edge in $\mathcal
T$.  Let $\{\mu, \mu'\}$ be an edge in $\mathcal T$ and let $\eps$ be
the virtual edge in $\skel(\mu)$ with $\corr(\eps) = \mu'$ having
$\eps'$ in $\skel(\mu')$ as twin.  The contraction of $\{\mu, \mu'\}$
replaces these two nodes by a single node with a skeleton obtained as
follows.  The skeletons $\skel(\mu)$ and $\skel(\mu')$ are
\emph{glued} together at the twins $\eps$ and $\eps'$ according to
their orientation, that is the source and target of $\eps$ is
identified with the source and target of $\eps'$, respectively.
Afterwards the resulting virtual edge is removed.  Applying the
contraction iteratively in an arbitrary order to edges in a
decomposition tree $\mathcal T$ yields a tree consisting of a single
node $\mu$ (which can be seen as trivial decomposition tree).  Then
the graph \emph{represented} by $\mathcal T$ is $\skel(\mu)$, which is
uniquely determined by~$\mathcal T$.

A \emph{reversed decomposition tree} is defined as a decomposition
tree with the only difference that in the decomposition step one of
the two twin edges is reversed and in the contraction step they are
glued together such that they point in opposite directions.  Note that
a reversed decomposition tree can be easily transformed into an
equivalent normal decomposition tree representing the same graph by
reversing one virtual edge for every pair of twin edges.

A special decomposition tree of a biconnected planar graph $G$ is the
SPQR-tree.  A decomposition tree is an SPQR-tree if each inner node is
either an S-, a P-, or an R-node whose skeletons contain only virtual
edges forming a cycle, a bunch of at least three parallel edges or a
triconnected planar graph, respectively, such that no two S-nodes and
no two P-nodes are adjacent.  Moreover, each leaf is a Q-node whose
skeleton consists of two vertices with one virtual and one normal edge
between them.  The \emph{reversed SPQR-tree} is defined analogous as a
special case of the reversed decomposition tree.  The SPQR-tree
$\mathcal T$ of a planar biconnected graph $G$ is unique up to the
reversal of pairs of virtual edges that are twins.  We can assume
without loss of generality that the virtual edges in the skeleton of
each P-node are oriented in the same direction and those in the
skeleton of each S-node form a directed cycle.

The SPQR-tree $\mathcal T$ of $G$ represents all planar embeddings of
$G$, as there is a bijection between these embeddings and the set of
all combinations of embeddings of the skeletons in $\mathcal T$.  Note
that the embedding choices for the skeletons consist of reordering the
parallel edges in a P-node and flipping the skeleton of an R-node.
The SPQR-tree of a biconnected planar graph can be computed in linear
time~\cite{gm-lti-00}.  Fixing the embeddings of skeletons in an arbitrary
decomposition tree $\mathcal T$ also determines a planar embedding of
the represented graph $G$.  However, there may be planar embeddings
that are not represented by $\mathcal T$.

\section{Succinct Representation of all Duals of a Biconnected Graph}
\label{sec:succ-repr-all-duals}

Let $G$ be a biconnected graph with SPQR-tree $\cal T$ and planar
embedding $\cal G$.  In the following we study the effects of changing
the embedding of $G$ on the corresponding dual graph~$G^\star$ of $G$.
To this end, we do not consider the graphs themselves but their
SPQR-trees.  More precisely, we first show how the SPQR-tree of the
dual graph $G^\star$ can be directly obtained from the SPQR-tree of
the primal graph $G$.  This can then be used to understand the effects
in the dual graph caused by changing the embedding of a skeleton
in~$\cal T$.

We first define the \emph{dual decomposition tree} ${\cal T}^\star$ of
the decomposition tree $\mathcal T$ representing $G$ (with respect to
a fixed embedding $\cal G$ of $G$ that can be represented by $\mathcal
T$).  It can then be shown that the dual decomposition tree represents
the dual graph $G^\star$ of $G$.  Essentially, ${\cal T}^\star$ is
obtained from $\cal T$ by replacing each skeleton with its directed
dual and interpreting the resulting tree as a reversed decomposition
tree.  More precisely, for each node $\mu$ in $\mathcal T$, the dual
decomposition tree ${\cal T}^\star$ contains a \emph{dual node}
$\mu^\star$ having the dual of $\skel(\mu)$ as skeleton.  An edge
$\eps^\star$ in $\skel(\mu^\star)$ dual to a virtual edge $\eps$ in
$\skel(\mu)$ is again virtual and oriented from right to left with
respect to the orientation of $\eps$.  Similarly, an edge dual to a
normal edge is also a normal edge in the dual skeleton.  Two virtual
edges in $\mathcal T^\star$ are twins if and only if their
corresponding primal edges are twins.  This implicitly has the effect
that $\corr(\eps)^\star = \corr(\eps^\star)$ holds.  Obviously, the
tree structures of $\mathcal T^\star$ and $\cal T$ are isomorphic with
respect to the map assigning each node in $\cal T$ to its dual node in
${\cal T}^\star$.  For the case in which $\mathcal T$ is the SPQR-tree
of $G$ we obtain the following. The dual of a triconnected skeleton is
triconnected, the dual of a (directed) circle is a bunch of parallel
edges (all directed in the same direction), and the dual of a normal
edge with a parallel virtual edge is a normal edge with a parallel
virtual edge.  Thus, if a node $\mu$ in $\cal T$ is an S-, P-, Q-, or
R-node, its dual node $\mu^\star$ in $\mathcal T^\star$ is a P-, S-,
Q-, or R-node, respectively.  This in particular shows that the dual
SPQR-tree is again an SPQR-tree and not just an arbitrary
decomposition tree.

\begin{lemma}
  \label{lem:dual-spqr-is-spqr-of-dual}
  Let $G$ be a biconnected planar graph with SPQR-tree $\cal T$ and
  planar embedding $\cal G$.  The dual SPQR-tree $\mathcal T^\star$
  with respect to $\mathcal G$ is the reversed SPQR-tree of the dual
  $G^\star$.


\end{lemma}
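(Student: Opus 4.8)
The plan is to establish two facts and then combine them. First, that the dual decomposition tree $\mathcal{T}^\star$ represents the graph $G^\star$; and second, that $\mathcal{T}^\star$ meets the defining conditions of a reversed SPQR-tree, so that the uniqueness of the SPQR-tree forces it to be \emph{the} reversed SPQR-tree of $G^\star$. The correspondence of node types (S$\leftrightarrow$P, R$\leftrightarrow$R, Q$\leftrightarrow$Q) and the isomorphism of the underlying tree structures were already observed in the discussion preceding the lemma, so the substantive work lies in the representation claim.

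To show that $\mathcal{T}^\star$ represents $G^\star$, the key is that dualization commutes with contraction. Concretely, I would establish the following single-step commutation claim: let $\{\mu, \mu'\}$ be an edge of $\mathcal{T}$ with twin virtual edges $\eps \in \skel(\mu)$ and $\eps' \in \skel(\mu')$, let $S$ be the skeleton obtained by gluing $\skel(\mu)$ and $\skel(\mu')$ at $\eps, \eps'$ and deleting the resulting virtual edge, and give $S$ the embedding induced by $\mathcal{G}$; then $S^\star$ equals the skeleton obtained by gluing $\skel(\mu)^\star$ and $\skel(\mu')^\star$ at the dual twins $\eps^\star, (\eps')^\star$ in \emph{opposite} orientation and deleting them, which is exactly the contraction of $\{\mu^\star, (\mu')^\star\}$ in the reversed dual tree.

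The proof of this commutation claim is the step I expect to be the main obstacle, and I would carry it out by tracking faces. The dual virtual edge $\eps^\star$ connects the two faces $f_L, f_R$ of $\skel(\mu)$ incident to $\eps$, and likewise $(\eps')^\star$ connects the two faces of $\skel(\mu')$ incident to $\eps'$; the face of $\skel(\mu)$ bounded by $\eps$ on its $H_2$-side represents all of $H_2$, and symmetrically for $\skel(\mu')$. Deleting the virtual edges when forming $S$ merges precisely these two ``representative'' faces into a single face of $S$, while leaving all other faces intact. This is exactly the identification of the endpoints of $\eps^\star$ with those of $(\eps')^\star$ performed by the reversed contraction, so the vertex sets and edge incidences of $S^\star$ and of the glued dual skeleton coincide. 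The delicate bookkeeping is the orientation convention: one must check that $\eps^\star$ being directed from right to left with respect to $\eps$ makes the identification the opposite-orientation gluing of a reversed decomposition tree. Since contractions may be performed in any order and each preserves the represented graph, contracting all edges of $\mathcal{T}$ reduces it to a single node with skeleton $G$, while the corresponding contractions reduce $\mathcal{T}^\star$ to a single node whose skeleton is $G^\star$ by the claim; hence $\mathcal{T}^\star$ represents $G^\star$.

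It remains to argue that $\mathcal{T}^\star$ is a reversed SPQR-tree. The skeleton types are already correct by the node-type correspondence, so only the adjacency constraints need checking. Since dualization swaps S- and P-nodes, two adjacent S-nodes (resp.\ P-nodes) in $\mathcal{T}^\star$ would correspond to two adjacent P-nodes (resp.\ S-nodes) in the SPQR-tree $\mathcal{T}$ of $G$, contradicting the defining property of an SPQR-tree; hence no two S-nodes and no two P-nodes are adjacent in $\mathcal{T}^\star$. Thus $\mathcal{T}^\star$ is a valid reversed SPQR-tree representing $G^\star$, and by the uniqueness of the SPQR-tree up to reversal of twin pairs it is the reversed SPQR-tree of $G^\star$.
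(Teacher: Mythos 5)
Your proposal is correct and takes essentially the same approach as the paper: both reduce the lemma to the claim that dualization commutes with a single edge-contraction in the (generalized) decomposition tree, proved by tracking the faces incident to the twin virtual edges and then iterating until both trees are single nodes---the orientation bookkeeping you defer as ``delicate'' is exactly the paper's computation that $f_r$ merges with $f_\ell'$ during gluing and $f_\ell$ with $f_r'$ upon deleting the merged virtual edge, yielding the crosswise, opposite-orientation identification of the endpoints of $\eps^\star$ and $\eps'^\star$. Your explicit check that $\mathcal T^\star$ satisfies the SPQR structural conditions (S/P swap, no two adjacent S- or P-nodes, uniqueness of the SPQR-tree) is handled in the paper by the discussion immediately preceding the lemma rather than inside the proof, but the content is the same.
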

\begin{proof}
  We show a slightly more general result by dealing with arbitrary
  decomposition trees instead of SPQR-trees.  We show that first
  contracting an edge $\{\mu, \mu'\}$ in a decomposition tree $\cal T$
  into a node $\mu\mu'$ and then constructing the dual decomposition
  tree is equivalent to first constructing the decomposition tree
  $\mathcal T^\star$ and then contracting the edge $\{\mu^\star,
  \mu'^\star\}$ into $\mu^\star\mu'^\star$ (recall that ${\cal
    T}^\star$ is interpreted as reversed decomposition tree, thus the
  gluing operation contained in the contraction of $\{\mu^\star,
  \mu'^\star\}$ is reversed).  Applying this operation iteratively
  until the trees $\cal T$ and $\mathcal T^\star$ consist of single
  nodes then directly shows that the reversed decomposition tree
  $\mathcal T^\star$ represents the graph $G^\star$ dual to the graph
  $G$ represented by $\cal T$.

  Let $\eps$ and $\eps'$ be the virtual edges in $\skel(\mu)$ and
  $\skel(\mu')$ corresponding to the edge $\{\mu, \mu'\}$ in $\mathcal
  T$.  Let further $f_\ell$ and $f_r$, and $f_\ell'$ and $f_r'$ be the
  faces left and right to $\eps$ and $\eps'$ with respect to the
  orientation of $\eps$ and $\eps'$, respectively; see
  Figure~\ref{fig:contraction-in-dual}(a) and~(b).  We denote the
  graph $\skel(\mu) - \eps$ by $H$ and the graph $\skel(\mu^\star) -
  \eps^\star$ by $H^\star$ (note that $H^\star$ is not really the dual
  graph of $H$).  The graphs $H'$ and $H'^\star$ are defined similar.
  When contracting $\{\mu, \mu'\}$, first the virtual edges $\eps$ and
  $\eps'$ are glued together, that is $u$ and $v$ are identified with
  $u'$ and $v'$, respectively; see
  Figure~\ref{fig:contraction-in-dual}(c).  Obviously, the dual of the
  resulting graph can be obtained from $H^\star$ and $H'^\star$ by
  identifying $f_r$ with $f_\ell'$ and adding the edge $\{f_\ell,
  f_r'\}$ (or the other way round).  Finally, removing the edge $(u,
  v)$ contracts $f_\ell$ and $f_r'$ into a single vertex, see
  Figure~\ref{fig:contraction-in-dual}(d).  Thus the dual graph
  $\skel(\mu\mu')^\star$ of the resulting skeleton $\skel(\mu\mu')$
  can be obtained from $\skel(\mu)^\star$ and $\skel(\mu')^\star$ by
  removing the virtual edges $\eps^\star$ and $\eps'^\star$ and
  identifying their endpoints with each other, reversing their
  orientation.  As this is equal to contracting $\{\mu^\star,
  \mu'^\star\}$ in ${\cal T}^\star$, we have $\skel(\mu\mu')^\star =
  \skel(\mu^\star\mu'^\star)$, which concludes the proof.
\end{proof}

\begin{figure}
  \centering
  \includegraphics[page=1]{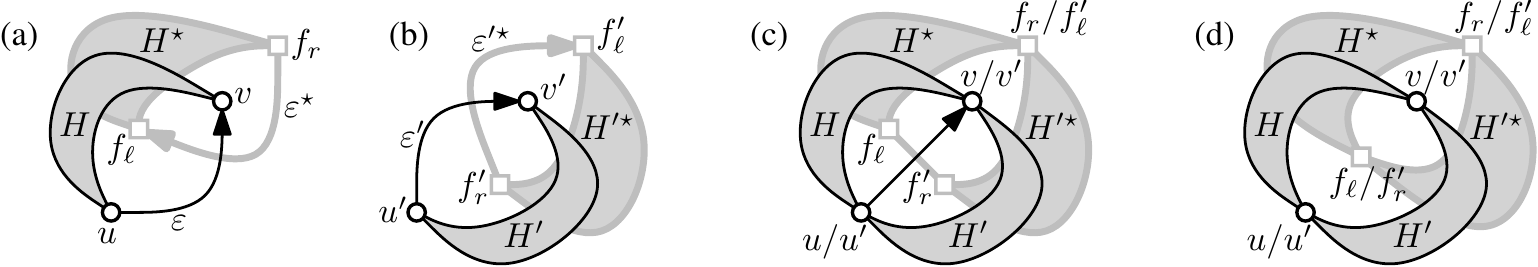}
  \caption{Glueing the virtual edge $\eps$ (a) and $\eps'$ (b)
    together (c) and removing the resulting edge~(d).}
  \label{fig:contraction-in-dual}
\end{figure}

The above results help investigating the effects a change in the
embedding of the graph $G$ has on its dual $G^\star$.  Flipping the
skeleton of an R-node and reordering the virtual edges in a P-node
give rise to the following two operations: \emph{reversal} of R-nodes and
\emph{restacking} of S-nodes.
A reversal applied on an R-node $\mu$ reverses the direction of all
the virtual edges in $\skel(\mu)$. As no other skeleton is changed by
this operation, this only affects how $\skel(\mu)$ is glued to the
skeletons of its adjacent nodes.
Let $\mu$ be an S-node with virtual edges $\eps_1,\dots,\eps_k$. A
restacking of $\mu$ picks an arbitrary ordering of
$\eps_1,\dots,\eps_k$ and glues their end-points such that they create
a directed cycle $C$ in that order. Then, the skeleton of $\mu$ is replaced
by $C$.

\begin{lemma}
  \label{lem:reversal-and-restacking}
  Let $\cal T$ be the SPQR-tree of a biconnected planar graph and let
  ${\cal T}^\star$ be the SPQR-tree of its dual, with respect to a
  fixed planar embedding.  Flipping an R-node and reordering a P-node
  in $\cal T$ corresponds to reversing its dual R-node and restacking
  its dual S-node, respectively.
\end{lemma}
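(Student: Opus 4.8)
The plan is to lean on Lemma~\ref{lem:dual-spqr-is-spqr-of-dual}, which tells us that $\mathcal T^\star$ is assembled skeleton by skeleton: each dual skeleton $\skel(\mu^\star)$ is the planar dual of $\skel(\mu)$ with respect to the embedding that $\mathcal G$ induces on it, every dual virtual edge $\eps^\star$ is oriented from right to left across $\eps$, and both the tree structure and the twin relation are inherited unchanged. The first thing I would observe is that dualization is therefore \emph{local}: changing the embedding of one skeleton in $\mathcal T$ alters only the single corresponding dual skeleton in $\mathcal T^\star$, leaving the tree and all other dual skeletons intact. It then suffices to analyse flipping an R-node and reordering a P-node one at a time.

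For an R-node $\mu$, I would argue as follows. Flipping $\mu$ reflects the planar embedding of the triconnected graph $\skel(\mu)$, which is unique up to reflection. A reflection preserves the set of faces and every edge--face incidence, so the abstract graph $\skel(\mu^\star)$ does not change at all; the only effect is that the two faces flanking each edge swap their roles as left and right. Since each dual virtual edge runs from right to left, this reverses the orientation of every virtual edge of $\skel(\mu^\star)$ while fixing the graph, which is exactly a reversal of the dual R-node $\mu^\star$.

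For a P-node $\mu$ with parallel virtual edges $\eps_1, \dots, \eps_k$ between $s$ and $t$, all directed from $s$ to $t$, I would set up the face structure explicitly. An embedding is a cyclic order $\eps_{\sigma(1)}, \dots, \eps_{\sigma(k)}$ of the edges, the faces are the $k$ regions $f_1, \dots, f_k$ lying between consecutive edges, and the dual skeleton $\skel(\mu^\star)$ is the cycle on $f_1, \dots, f_k$ in which $\eps_{\sigma(i)}^\star$ joins $f_{i-1}$ to $f_i$. Thus the cyclic order of the dual edges along the cycle is $\eps_{\sigma(1)}^\star, \dots, \eps_{\sigma(k)}^\star$, and the common direction of the $\eps_i$ makes this a directed cycle, as an S-node requires. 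Reordering the parallel edges of $\mu$ therefore reorders the $\eps_i^\star$ along the directed cycle in precisely the same way, and conversely every cyclic order is realised by some embedding of $\mu$; this is exactly a restacking of the dual S-node $\mu^\star$.

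The step I expect to be the main obstacle is the P-node/S-node correspondence: one has to describe the faces of a parallel bunch carefully and keep track of the edge orientations in order to certify both that reorderings of the parallel edges and restackings of the dual cycle are in bijection and that the resulting cycle is correctly \emph{directed}. The R-node part, by comparison, is immediate once one notes that reflection merely interchanges the left and right face of each edge.
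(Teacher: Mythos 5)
Your proposal is correct and follows essentially the same route as the paper's proof: invoke Lemma~\ref{lem:dual-spqr-is-spqr-of-dual} to work with the dual tree skeleton by skeleton, note that flipping an R-node swaps left and right and hence reverses every dual virtual edge, and that reordering a P-node's parallel edges restacks the dual cycle, which remains directed because the primal edges keep a common orientation. You simply spell out in more detail (locality of the change, the explicit face structure of the parallel bunch) what the paper dispatches in a few lines.
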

\begin{proof}
  Due to Lemma~\ref{lem:dual-spqr-is-spqr-of-dual} we can work with
  the dual SPQR-tree instead of the SPQR-tree of the dual.  Obviously,
  flipping an R-node $\mu$ in $\cal T$ exchanges left and right in
  $\skel(\mu)$ and thus reverses the orientation of each virtual edge
  in $\skel(\mu^\star)$, where $\mu^\star$ is the node in ${\cal
    T}^\star$ dual to $\mu$.  Thus, flipping $\mu$ corresponds to a
  reversal of $\mu^\star$.  Similarly, reordering the virtual edges in
  the skeleton of a P-node $\mu$ has the effect that the virtual edges
  in its dual S-node $\mu^\star$ are restacked, yielding a different
  cycle.  Note that this cycle is again directed since the virtual
  edges in $\mu$ are still all oriented in the same direction.
\end{proof}

This shows that the SPQR-tree of the dual graph with respect to a
fixed embedding can be used to represent the dual graphs with respect
to all possible planar embeddings by allowing reversal and restacking
operations.  If we say that an SPQR-tree \emph{represents a set of
  dual graphs}, we implicitly allow reversal and restacking.  The
following theorem directly follows.

\begin{theorem}
  \label{thm:-dual-spqr-represents-duals}
  The dual SPQR-tree of a biconnected planar graph $G$ represents
  exactly the dual graphs of $G$.
\end{theorem}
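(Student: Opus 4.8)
The plan is to derive the statement by combining the structural correspondence of Lemma~\ref{lem:dual-spqr-is-spqr-of-dual} with the operational correspondence of Lemma~\ref{lem:reversal-and-restacking}, together with the fact that the SPQR-tree of $G$ parametrizes all of its planar embeddings. By definition, a graph is a dual graph of $G$ if and only if it is isomorphic to the dual $G^\star$ of $G$ with respect to some planar embedding $\G$ of $G$. Recall from Section~\ref{sec:spqr-tree} that the planar embeddings of $G$ are in bijection with the combinations of embedding choices of the skeletons of its SPQR-tree $\T$, and that the only such choices are reordering the parallel edges of each P-node and flipping each R-node. Hence it suffices to show that, as these choices range over all possibilities, the resulting duals coincide with the graphs obtained from the dual SPQR-tree by all admissible reversals and restackings, which by definition is the set of dual graphs represented by it.

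First I would fix one reference embedding $\G$ of $G$ with dual SPQR-tree $\T^\star$; by Lemma~\ref{lem:dual-spqr-is-spqr-of-dual} this $\T^\star$ is the reversed SPQR-tree of $G^\star$ and hence represents the dual of $G$ for the embedding $\G$. For an arbitrary embedding $\G'$, I would write $\G'$ as $\G$ modified by a set of R-node flips and P-node reorderings, which is possible precisely because the SPQR-tree parametrizes all embeddings. Applying Lemma~\ref{lem:dual-spqr-is-spqr-of-dual} to $\G'$ shows that its dual SPQR-tree represents $G^\star$ with respect to $\G'$, and Lemma~\ref{lem:reversal-and-restacking} identifies this dual SPQR-tree with the result of applying to $\T^\star$ one reversal for every flip and one restacking for every reordering in the chosen set. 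Since the correspondence $\mu \mapsto \mu^\star$ is an isomorphism of tree structures and each operation affects only its own skeleton, the combined set of operations is well defined and yields exactly one reversal/restacking variant of $\T^\star$. This gives one inclusion: every dual graph of $G$ is represented by $\T^\star$. For the reverse inclusion I would run the argument backwards, using that every combination of reversals and restackings of $\T^\star$ corresponds, via Lemma~\ref{lem:reversal-and-restacking}, to a combination of flips and reorderings, hence to a genuine embedding of $G$ whose dual is exactly the represented graph. The two inclusions yield the claimed set equality.

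The step I expect to require the most care is arguing that the per-operation correspondence of Lemma~\ref{lem:reversal-and-restacking} lifts to arbitrary simultaneous combinations of operations. The conceptual subtlety here is that, unlike ordinary SPQR-tree embedding moves, reversal and restacking genuinely change the represented graph---a reversal changes which endpoints of a virtual edge are identified with those of its twin, and a restacking changes the cyclic position that a virtual edge occupies---so each such operation produces a different dual graph rather than merely a different embedding of the same one. This is precisely why these operations can realize all of the distinct duals of $G$. To make the lifting rigorous I would rely on the locality of the operations, namely that flipping or reordering a skeleton of $\T$ leaves all other skeletons unchanged, and dually that reversing or restacking a node of $\T^\star$ leaves all other dual skeletons unchanged; since the embedding of $G$ is determined independently by the choices at the individual P- and R-nodes, the induced dual tree is obtained by applying the corresponding independent choices node by node. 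A minor additional point to address is that embeddings are considered up to a global reflection, which dualizes to a reflection of $G^\star$ and therefore does not change the isomorphism type of the represented dual graph.
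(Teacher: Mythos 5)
Your proposal is correct and follows essentially the same route as the paper: the paper gives no separate proof, stating that the theorem ``directly follows'' from Lemma~\ref{lem:dual-spqr-is-spqr-of-dual} and Lemma~\ref{lem:reversal-and-restacking} together with the convention that representing a set of dual graphs implicitly allows reversal and restacking operations. Your write-up merely makes explicit the two inclusions and the lifting of the per-node correspondence to arbitrary combinations of operations, which is precisely the argument the paper leaves implicit.
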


If we are only interested in the structure of the dual graph and not
in its embedding induced by the primal graph, we may also allow the
usual SPQR-tree operations, that is flipping R-nodes and reordering
the virtual edges in P-nodes.  In this case we can apply the reversal
operation not only to R-nodes but also to P-nodes (as this only
changes the embedding but not the graph).  Moreover, reversing a
Q-node does not change anything and the reversal of an S-node can be
seen as a special way of restacking it.  This observation can be used
to show the following lemma.

\begin{lemma}
  \label{lem:reversal-of-single-edge}
  Let $G$ be a biconnected planar graph and let $G^\star$ be its dual
  graph with SPQR-tree $\mathcal T^\star$ with respect to an embedding
  $\cal G$ of $G$.  Let $\mathcal T_\eps^\star$ be the SPQR-tree
  obtained from $\mathcal T^\star$ by reversing the orientation of the
  virtual edge $\eps$ in $\mathcal T^\star$ and let $G_\eps^\star$ be
  the graph represented by it.  Then there exists an embedding
  $\mathcal G_\eps$ of $G$ such that $G_\eps^\star$ is the dual graph
  of $G$ with respect to $\mathcal G_\eps$.
\end{lemma}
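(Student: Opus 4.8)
The plan is to reduce the reversal of the single virtual edge $\eps$ to a sequence of full node reversals, each of which is already known to correspond to an admissible change of the primal embedding. The crucial observation I would exploit first is the twin-reversal invariance of SPQR-trees: since the SPQR-tree is unique only up to the simultaneous reversal of a pair of twin edges, reversing \emph{both} partners of a twin pair leaves the represented graph unchanged. Consequently, reversing $\eps$ alone will have the same effect as reversing every virtual edge on one entire side of the tree, as long as the twin of $\eps$ lies on the other side.

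Concretely, I would delete the tree edge carrying $\eps$ and its twin $\eps'$, splitting $\mathcal T^\star$ into two subtrees $\mathcal T_\mu^\star$ (the one whose skeleton contains $\eps$) and $\mathcal T_\nu^\star$ (containing $\eps'$), and then reverse every virtual edge in every skeleton of $\mathcal T_\mu^\star$. For every twin pair lying entirely inside $\mathcal T_\mu^\star$ both partners get reversed, so by twin-reversal invariance these cancel and contribute nothing to the represented graph. The only twin pair split by the cut is $\{\eps, \eps'\}$, and of these only $\eps$ is reversed. Hence reversing all of $\mathcal T_\mu^\star$ produces exactly $G_\eps^\star$. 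Since reversing all virtual edges of a single skeleton is precisely a node reversal, this amounts to reversing every node of $\mathcal T_\mu^\star$.

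It then remains to check that each of these node reversals is realizable on the primal side and that they can be imposed simultaneously. By Lemma~\ref{lem:reversal-and-restacking} a reversal of a dual R-node corresponds to flipping its primal R-node, and a reversal of a dual S-node is a special restacking and hence corresponds to a reordering of its primal P-node; reversing a dual P-node or Q-node does not change the represented dual graph at all, altering only the induced embedding. Because R-node flips and P-node reorderings are independent degrees of freedom of the planar embedding of $G$ (Section~\ref{sec:spqr-tree}), fixing all of them at once specifies a single embedding $\mathcal G_\eps$ of $G$, and by Theorem~\ref{thm:-dual-spqr-represents-duals} its dual is the graph represented by the reversed tree, namely $G_\eps^\star$.

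The main obstacle I anticipate is the bookkeeping of orientations in the cancellation step: one must verify carefully that every internal twin pair of $\mathcal T_\mu^\star$ is reversed exactly twice while $\eps$ is reversed exactly once, and that the combined node reversals are mutually independent, so that they genuinely define one consistent embedding $\mathcal G_\eps$ of $G$ rather than merely a sequence of intermediate graphs.
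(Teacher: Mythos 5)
Your proposal is correct and follows essentially the same route as the paper's own proof: the paper likewise splits $\mathcal T^\star$ at the tree edge corresponding to $\eps$, reverses every node in one of the resulting subtrees, argues that this changes only the orientation of $\eps$ because both members of every twin pair internal to that subtree are reversed, and then realizes the node reversals on the primal side via Lemma~\ref{lem:reversal-and-restacking} together with the observation that reversing P- and Q-nodes leaves the represented graph unchanged and that reversing an S-node is a special restacking. The only cosmetic difference is that you justify the cancellation of internal twin pairs by the uniqueness of the SPQR-tree up to twin-pair reversal, while the paper verifies it directly by tracking the gluing step in each edge contraction; both justifications are sound.
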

\begin{proof}
  Let $\mu$ be the node in $\mathcal T^\star$ containing the virtual
  edge $\eps$ and let $\corr(\eps) = \mu'$ be the neighbor of $\mu$
  corresponding to $\eps$.  Removing the edge $\{\mu, \mu'\}$ splits
  $\mathcal T^\star$ into two subtrees $\mathcal T_\mu^\star$ and
  $\mathcal T_{\mu'}^\star$.  We claim that the reversal of all nodes
  in one of these subtrees (no matter which one) yields an SPQR-tree
  $\mathcal T_{\mu\mu'}^\star$ representing $G_\eps^\star$.  Then it
  follows by Lemma~\ref{lem:reversal-and-restacking} and the
  observation above, that $G_\eps^\star$ is a dual graph of $G$.

  It remains to show the claim.  As it does not matter whether the
  orientation of $\eps$ or of its twin in $\mu'$ is changed, we can
  assume without loss of generality that all nodes in $\mathcal
  T_\mu^\star$ are reversed in $\mathcal T_{\mu\mu'}^\star$.  The
  graph represented by $\mathcal T_{\mu\mu'}^\star$ can be obtained by
  contracting the edges in an arbitrary order.  Contracting edges in
  the subtree $\mathcal T_{\mu'}^\star$ has the same effect as in the
  original graph, since $T_{\mu'}^\star$ was not changed.  Similarly,
  contracting an edge in $\mathcal T_\mu^\star$ also has the same
  effect as the orientation of both corresponding virtual edges is
  reversed.  Finally, when contracting the edge $\{\mu, \mu'\}$ the
  skeletons are glued together oppositely as $\eps$ is reversed
  whereas its twin remains the same.  Thus, reversing all nodes in
  $\mathcal T_\mu^\star$ is equivalent to reversing the orientation of
  $\eps$, which concludes the proof.
\end{proof}

Lemma~\ref{lem:reversal-and-restacking} and
Lemma~\ref{lem:reversal-of-single-edge} together yield the following
theorem.

\begin{theorem}
  \label{thm:dual-spqr-tree-rep}
  Two SPQR-trees represent the same set of dual graphs if and only if
  they can be transformed into each other by either using reversal and
  restacking operations, or by choosing an orientation of the virtual
  edges and restacking the skeletons of S-nodes.
\end{theorem}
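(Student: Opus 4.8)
The plan is to read the statement through the identification, provided by Lemma~\ref{lem:reversal-and-restacking} together with Theorem~\ref{thm:-dual-spqr-represents-duals}, of the \emph{represented set of dual graphs} of an SPQR-tree with its \emph{orbit} under reversal and restacking. First I would prove the ``transformability implies same set'' direction. For the reversal-and-restacking version this is immediate: by Lemma~\ref{lem:reversal-and-restacking} a reversal of an R-node and a restacking of an S-node in the dual tree correspond exactly to flipping an R-node and reordering a P-node in the primal SPQR-tree, i.e.\ to the elementary embedding changes of $G$, so by Theorem~\ref{thm:-dual-spqr-represents-duals} the orbit of an SPQR-tree under these operations is precisely the set of dual graphs it represents. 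Since reversal is an involution and restacking can realize (and hence undo) any cyclic order, these operations are invertible, so two trees related by them lie in a common orbit and represent the same set. For the orientation-plus-restacking version I would invoke Lemma~\ref{lem:reversal-of-single-edge}: reversing a single virtual edge produces a tree whose represented graph is again a dual graph of $G$; iterating over the edges shows that an arbitrary choice of orientation of the virtual edges, combined with restacking the S-nodes, never leaves the represented set.

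Next I would prove the converse, ``same set implies transformability'', which I expect to be the heart of the argument. Here the plan is to use the uniqueness of the SPQR-tree of a biconnected planar graph up to reversal of twin pairs. Given two SPQR-trees $\mathcal{S}_1$ and $\mathcal{S}_2$ representing the same nonempty set $D$, I would pick a common represented graph $H \in D$. Because each $\mathcal{S}_i$ represents $H$, some sequence of reversals and restackings turns $\mathcal{S}_i$ into a decomposition tree that contracts to $H$; since reversal and restacking preserve node types and the graph structure of skeletons, this tree is an SPQR-tree of $H$, so in particular $\mathcal{S}_1$ and $\mathcal{S}_2$ have the same tree structure and skeletons. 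By the uniqueness of the SPQR-tree of $H$, the two resulting trees agree up to reversal of pairs of twin virtual edges, and restacking accounts for the cyclic orders of the S-nodes. It then follows that $\mathcal{S}_1$ and $\mathcal{S}_2$ are transformable into one another.

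The step I expect to be the main obstacle is the orientation bookkeeping in this converse. Two SPQR-trees of the same graph $H$ coincide only up to reversing pairs of twin virtual edges, and such a twin-pair reversal is in general \emph{not} a single reversal operation, which flips all virtual edges of one R-node simultaneously. This is exactly why the theorem offers the second, more permissive certificate: allowing an arbitrary choice of orientation of the virtual edges absorbs the twin-pair reversal freedom, while Lemma~\ref{lem:reversal-of-single-edge} guarantees that such orientation changes stay inside the represented set. I would therefore arrange the proof so that the reversal-and-restacking characterization is obtained after fixing the standard orientation normalization (P-node edges oriented consistently and S-node edges forming a directed cycle), which pins down the twin pairs, and the orientation-plus-restacking characterization is the general statement; their equivalence then reduces to the two observations that a whole-node reversal is a special choice of orientation and that reversing a twin pair does not change the represented graph.
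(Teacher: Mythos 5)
Your proposal is correct and follows essentially the same route as the paper, which states the theorem as a direct consequence of Lemma~\ref{lem:reversal-and-restacking} and Lemma~\ref{lem:reversal-of-single-edge} without spelling out a proof. Your write-up supplies exactly the details the paper leaves implicit---the converse direction via the uniqueness of the SPQR-tree up to twin-pair reversals, and the observation that the second, orientation-based certificate absorbs the twin-pair reversal freedom that whole-node reversals alone cannot express---so it is, if anything, more careful than the paper's one-line derivation.
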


\section{Equivalence Relation}
\label{sec:equivalence-relation}

We define the relation $\sim$ on the set of planar graphs as follows.
Two graphs $G_1$ and $G_2$ are related, i.e., $G_1 \sim G_2$, if and
only if $G_1$ and $G_2$ can be embedded such that they have the same
dual graph $G_1^\star = G_2^\star$.  We call $\sim$ the \emph{common
  dual relation}.

\begin{theorem}
  \label{thm:biconn-equiv-rel}
  The common dual relation $\sim$ is an equivalence relation on the
  set of biconnected planar graphs.  For a biconnected planar graph
  $G$, the set of dual graphs of $G$ is an equivalence class with
  respect to~$\sim$.
\end{theorem}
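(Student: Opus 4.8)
The plan is to route everything through a single structural fact about dual SPQR-trees and then read off reflexivity, symmetry, transitivity, and the equivalence-class statement as formal consequences. Throughout I write $D(G)$ for the set of isomorphism classes of dual graphs of a biconnected planar graph $G$. The starting point is the involution built into duality noted in the introduction: since the dual of $G^\star$ with respect to the induced embedding $\mathcal G^\star$ is again $G$, we have $H \in D(G)$ if and only if $G \in D(H)$, and $D(G)$ is never empty. Reflexivity of $\sim$ is then immediate (any dual of $G$ is a common dual of $G$ with itself), and symmetry is built into the definition, so all the real content of the theorem lies in transitivity together with the closure property needed to identify the equivalence classes with the sets $D(G)$.

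The key lemma I would isolate is: if $X, Y \in D(Z)$, i.e.\ both are duals of the same graph $Z$, then $D(X) = D(Y)$. To prove it I use the involution: for the embeddings realizing $X = Z^\star$ and $Y = Z^\star$, the dual of $X$ (resp.\ $Y$) with respect to the induced embedding is exactly $Z$, so by Lemma~\ref{lem:dual-spqr-is-spqr-of-dual} the dual SPQR-tree of $X$ and the dual SPQR-tree of $Y$ are both the (reversed) SPQR-tree of the single graph $Z$. In particular they are one and the same abstract SPQR-tree $\mathcal S_Z$, differing at most in the gauge an SPQR-tree carries---orientations of twin pairs and the ordering of parallel edges in P-nodes. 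By Theorem~\ref{thm:-dual-spqr-represents-duals} the dual SPQR-tree of $X$ represents exactly $D(X)$ and that of $Y$ exactly $D(Y)$; but the represented set depends only on the abstract tree, since Theorem~\ref{thm:dual-spqr-tree-rep} permits free orientation of the virtual edges and restacking of S-nodes, while a reordering of P-nodes changes only the embedding of a represented dual and not its isomorphism type. As both trees are $\mathcal S_Z$, they represent the same set, whence $D(X) = D(Y)$.

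With this lemma the remaining steps are short. For transitivity, if $A \sim B$ there is a common dual $K_1$, so $A, B \in D(K_1)$ and the lemma gives $D(A) = D(B)$; likewise $B \sim C$ yields $D(B) = D(C)$. Hence $D(A) = D(C)$, and as this set is nonempty any element of it is a common dual of $A$ and $C$, so $A \sim C$. For the equivalence-class statement I fix a biconnected planar $G$ and show $D(G)$ is exactly one class. Any two members of $D(G)$ are related because $G$ is itself a common dual of both (using the involution once more). For closure, suppose $H \in D(G)$ and $H \sim H'$ via a common dual $L$; then $G, L \in D(H)$, so the lemma gives $D(G) = D(L)$, and since $H' \in D(L) = D(G)$ we get $H' \in D(G)$. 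Thus $D(G)$ contains the whole class of each of its members and nothing else, so it is precisely an equivalence class.

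The step I expect to be the real obstacle is the structural lemma, and inside it the careful bookkeeping of Theorem~\ref{thm:dual-spqr-tree-rep}: one must read ``represents the same set of dual graphs'' in the abstract-graph sense (the alternative with free orientation and S-restacking) rather than the embedding-preserving sense, and must check that the P-node reorderings distinguishing the two copies of $\mathcal S_Z$ are genuinely irrelevant because for each fixed reversal/restacking configuration they yield an isomorphic dual. Once that identification of both dual SPQR-trees with $\mathcal S_Z$ is pinned down, the three relation properties and the description of the classes are immediate.
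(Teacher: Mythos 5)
Your proposal is correct and is essentially the paper's own proof in slightly different packaging: your key lemma (two duals of a common graph $Z$ have the same set of duals) is exactly what the paper's transitivity paragraph establishes, the only mechanical difference being that the paper identifies the two dual SPQR-trees via uniqueness of the SPQR-tree of the represented common dual, while you identify both with the reversed SPQR-tree of $Z$ via Lemma~\ref{lem:dual-spqr-is-spqr-of-dual} and the duality involution---the same uniqueness fact, and the same appeal to Theorems~\ref{thm:-dual-spqr-represents-duals} and~\ref{thm:dual-spqr-tree-rep} to conclude equality of the represented dual sets. Your equivalence-class argument (mutual relatedness of duals of $G$ via the involution, closure via equality of dual sets) likewise matches the paper's second paragraph in substance.
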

\begin{proof}
  Clearly, $\sim$ is symmetric and reflexive.  For the transitivity
  let $G_1$, $G_2$ and $G_3$ be three biconnected planar graphs such
  that $G_1 \sim G_2$ and $G_2 \sim G_3$.  Let further $\mathcal
  T_1^\star$, $\mathcal T_2^\star$ and $\mathcal T_3^\star$ be the
  dual SPQR-trees representing all duals of $G_1$, $G_2$ and $G_3$,
  respectively.  Due to $G_1 \sim G_2$ there exists a graph $G$ that
  is represented by $\mathcal T_1^\star$ and $\mathcal T_2^\star$.
  Since the SPQR-tree of a biconnected planar graph is unique (up to
  the reversal of virtual edges), it follows that $\mathcal T_1^\star$
  and $\mathcal T_2^\star$ are the same SPQR-trees representing the
  same sets of duals.  The same argument shows that $G_2$ and $G_3$
  have the same set of dual graphs, due to $G_2 \sim G_3$.  Thus, also
  $G_1$ and $G_3$ have exactly the same set of dual graphs, which
  yields $G_1 \sim G_3$.

  For the second statement, let $C^\star$ be the set of dual graphs of
  $G$.  Clearly, for $G_1^\star, G_2^\star \in C^\star$ the graph $G$
  is a common dual, thus $G_1^\star \sim G_2^\star$.  On the other
  hand, let $G_1^\star \in C^\star$ and $G_1^\star \sim G_2^\star$.
  By the above argument, $G_1^\star$ and $G_2^\star$ have the same set
  of dual graphs.  Thus $G$ is a dual graph of $G_2^\star$ yielding
  $G_2^\star \in C^\star$.
\end{proof}

Theorem~\ref{thm:biconn-equiv-rel} shows that the equivalence class
$C$ of a biconnected planar graph $G$ with respect to the common dual
relation is exactly the set of dual graphs that is represented by the
SPQR-tree $\mathcal T$ of $G$.  The dual SPQR-tree $\mathcal T^\star$
of $G$ also represents a set of dual graphs forming the equivalence
class $C^\star$.  We say that $C^\star$ is the \emph{dual equivalence
  class} of $C$.  Given an arbitrary graph $G \in C$ and an arbitrary
graph $G^\star \in C^\star$ then $G$ and $G^\star$ can be embedded
such that they are dual to each other since $C^\star$ contains exactly
the graphs that are dual to $G$.  The problems \mpd and {\sc Graph
  Self-Duality} can be reformulated in terms of the equivalence
classes of the common dual relation.  Two biconnected planar graphs
are a {\sc Yes}-instance of \mpd if and only if their equivalence
classes are dual to each other.  A biconnected planar graph is graph
self-dual if and only if its equivalence class is dual to itself.
This in particular means that either each or no graph in an
equivalence class is graph self-dual.

Although it might seem quite natural that the common dual relation is
an equivalence relation, this is not true for general planar graphs.
This fact is stated in the following theorem.

\begin{theorem}
  \label{thm:no-equivalence-relation}
  The common dual relation $\sim$ is not transitive on the set of
  planar graphs.
\end{theorem}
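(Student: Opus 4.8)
The plan is to exhibit three connected planar graphs $G_1,G_2,G_3$ with $G_1\sim G_2$ and $G_2\sim G_3$ but $G_1\not\sim G_3$. The mechanism I would exploit is that a pendant edge (a bridge) dualizes to a loop, and that this loop sits at the vertex dual to whichever face the pendant is routed into. Concretely, if $C$ is a triconnected planar graph---so that, as recalled in the introduction, its embedding and hence its dual $C^\star$ are unique---and $G$ is obtained from $C$ by attaching a single pendant edge at a vertex $x$, then the only embedding freedom of $G$ is the choice of a face of $C$ incident to $x$ into which the pendant is drawn. Consequently the set of duals of $G$ consists exactly of the graphs obtained from $C^\star$ by adding a single loop at the vertex dual to a face incident to $x$. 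I would isolate this as a small rigidity lemma, since it reduces everything to bookkeeping of which faces each attachment vertex sees.

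For the core I would take the cube graph and add, inside each of the three quadrilateral faces meeting at one vertex $a$, the diagonal emanating from $a$; call the resulting triconnected planar graph $C$. Then $a$ is surrounded entirely by triangular faces, the antipodal vertex $b$ is surrounded entirely by the three untouched quadrilateral faces, and these two face sets are disjoint; moreover there is a vertex $c$ that is incident both to a triangular face also incident to $a$ and to a quadrilateral face also incident to $b$. I would set $G_1,G_2,G_3$ to be $C$ with a pendant edge attached at $a$, at $c$, and at $b$, respectively.

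The verifications then run as follows. For $G_1\sim G_2$ I would exhibit the common dual: $C^\star$ with a loop at the vertex dual to a triangular face incident to both $a$ and $c$; for $G_2\sim G_3$ the common dual is $C^\star$ with a loop at the vertex dual to a quadrilateral face incident to both $c$ and $b$. The key step, and the one I expect to be the main obstacle, is proving $G_1\not\sim G_3$, i.e.\ that these two graphs share no common dual \emph{at all}, which requires ruling out every embedding simultaneously. By the rigidity lemma, every dual of $G_1$ is $C^\star$ with a loop added at a vertex of degree three (dual to a triangle), while every dual of $G_3$ is $C^\star$ with a loop added at a vertex of degree four (dual to a quadrilateral). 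Any isomorphism between such duals must carry the unique loop-bearing vertex of one to that of the other; but these vertices have total degree $3+2=5$ and $4+2=6$ respectively, so no isomorphism exists. Here the face-size (equivalently, the dual-degree) invariant does all the work, so I would not even need $C$ to be asymmetric.

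Conceptually, what makes the non-transitivity possible---and what the argument really pins down---is that the set of planar duals of a graph is only a proper subfamily of its full Whitney class: any two graphs related by $\sim$ necessarily share the same cycle matroid, yet having the same cycle matroid is far from guaranteeing a common planar dual. The graph $G_2$, sitting at a cutvertex with enough embedding freedom, reaches into both a ``triangular'' and a ``quadrilateral'' pocket of this class and is therefore $\sim$-related to both $G_1$ and $G_3$, whereas $G_1$ and $G_3$ occupy pockets that the rigidity of the triconnected core keeps disjoint. The remaining work is thus almost entirely the careful face-incidence bookkeeping for $C$ together with the rigidity lemma; once these are in place, the three claims are immediate.
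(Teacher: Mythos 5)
Your proof is correct and follows essentially the same route as the paper: the paper attaches a loop to a triconnected core (so the embedding freedom is which face incident to the attachment vertex receives it, and the dual acquires a bridge there), while you attach a pendant edge whose dual is a loop --- the mirror-image of the same construction --- and both arguments separate $G_1$ from $G_3$ by the same face-degree invariant, with one attachment vertex seeing only triangles and the other only larger faces. Your explicit augmented-cube core and the stated rigidity lemma merely make concrete what the paper conveys via its figure.
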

\begin{proof}
  Consider the graph $G_1$ consisting of a triconnected planar graph
  with an additional loop as depicted in
  Figure~\ref{fig:no-equivalence-relation}(a).  Its dual graph is a
  triconnected component with a bridge attached to it.  In the graph
  $G_2$ the loop is attached to a different vertex, see
  Figure~\ref{fig:no-equivalence-relation}(b).  However, in both
  graphs $G_1$ and $G_2$, the loop can be embedded into the same face
  of the triconnected component, yielding the same dual graph (with a
  different embedding).  Thus, $G_1$ and $G_2$ have a common dual,
  i.e., $G_1 \sim G_2$ holds.  The same argument yields that $G_2$
  (with respect to the embedding in
  Figure~\ref{fig:no-equivalence-relation}(c)) and $G_3$ have a common
  dual graph, i.e., $G_2 \sim G_3$.  However, $G_1$ and $G_3$ do not
  have a common dual for the following reason.  Let $v_1$ and $v_3$ be
  the vertices in $G_1$ and $G_3$ incident to the loop, respectively.
  The only embedding choice in $G_1$ and $G_3$ is to embed the loop
  into one of the faces incident to $v_1$ and $v_3$, respectively.  In
  the dual graphs this has the effect that the bridge is attached to
  the corresponding vertex.  Since all faces incident to $v_1$ have
  degree~3 and all faces incident to $v_3$ have degree~4 or~5, the
  resulting dual graphs cannot be isomorphic.  Thus, $G_1 \not\sim
  G_3$ even though $G_1 \sim G_2 \sim G_3$ holds.
\end{proof}

\begin{figure}
  \centering
  \includegraphics{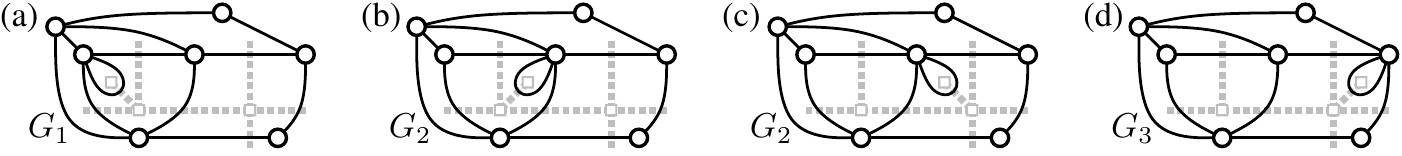}
  \caption{Illustration of Theorem~\ref{thm:no-equivalence-relation}}
  \label{fig:no-equivalence-relation}
\end{figure}

\section{Solving \mpd for Biconnected Graphs}
\label{sec:solv-mpd-biconn}

Due to Theorem~\ref{thm:-dual-spqr-represents-duals} the problem \mpd
can be rephrased as follows.

\begin{corollary}
  \label{cor:mpd-reduces-to-same-duals-of-spqr}
  Let $G_1$ and $G_2$ be two biconnected planar graphs with SPQR-trees
  $\mathcal T_1$ and $\mathcal T_2$, respectively.  There is an
  embedding $\mathcal G_1$ of $G_1$ such that $G_2$ is dual to $G_1$
  with respect to $\mathcal G_1$ if and only if $\mathcal T_2$ and the
  dual SPQR-tree $\mathcal T_1^\star$ represent the same set of dual
  graphs.
\end{corollary}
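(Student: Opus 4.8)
The plan is to prove both implications directly from Theorem~\ref{thm:-dual-spqr-represents-duals}, which tells us that the dual SPQR-tree $\mathcal{T}_1^\star$ represents \emph{exactly} the set of dual graphs of $G_1$, together with the uniqueness of the SPQR-tree of a biconnected planar graph. Throughout I use the convention that an SPQR-tree represents a set of dual graphs by implicitly allowing reversal and restacking operations, and I note the elementary observation that $G_2$ itself is always among the graphs represented by its own SPQR-tree $\mathcal{T}_2$, namely the one obtained by performing no operation at all.

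For the easy direction, I would assume that $\mathcal{T}_2$ and $\mathcal{T}_1^\star$ represent the same set of dual graphs. Since $G_2$ is represented by $\mathcal{T}_2$, it is then also represented by $\mathcal{T}_1^\star$. By Theorem~\ref{thm:-dual-spqr-represents-duals} the set represented by $\mathcal{T}_1^\star$ is precisely the set of dual graphs of $G_1$, so $G_2$ is a dual of $G_1$; that is, there is an embedding $\mathcal{G}_1$ of $G_1$ whose dual is $G_2$, as required.

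For the converse, suppose there is an embedding $\mathcal{G}_1$ of $G_1$ with dual $G_2$. Then $G_2$ is a dual graph of $G_1$, hence $G_2$ is among the graphs represented by $\mathcal{T}_1^\star$ by Theorem~\ref{thm:-dual-spqr-represents-duals}. I would then argue, exactly as in the transitivity part of the proof of Theorem~\ref{thm:biconn-equiv-rel}, that both $\mathcal{T}_2$ and $\mathcal{T}_1^\star$ represent the graph $G_2$, and that, since the SPQR-tree of a biconnected planar graph is unique up to reversal of pairs of twin virtual edges, the sequence of reversal and restacking operations that turns $\mathcal{T}_1^\star$ into a tree representing $G_2$ must produce exactly $\mathcal{T}_2$, again up to reversal of virtual edges. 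By Theorem~\ref{thm:dual-spqr-tree-rep} this means that $\mathcal{T}_2$ and $\mathcal{T}_1^\star$ represent the same set of dual graphs.

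I expect the converse direction to be the main obstacle. The subtlety is that $\mathcal{T}_1^\star$ represents $G_2$ only \emph{after} applying some reversal and restacking operations, so the identification with $\mathcal{T}_2$ is not immediate: one must justify it through the uniqueness of SPQR-trees, noting that reversal and restacking preserve the node types so that the resulting tree really is an SPQR-tree of $G_2$, and then invoke Theorem~\ref{thm:dual-spqr-tree-rep}, which guarantees that two SPQR-trees differing only by such operations and by the orientation of their virtual edges represent the same set of dual graphs.
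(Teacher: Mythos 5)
Your proof is correct and follows essentially the same route as the paper, which states the corollary as an immediate consequence of Theorem~\ref{thm:-dual-spqr-represents-duals}; the uniqueness-of-SPQR-tree argument you use for the converse direction is precisely the one the paper employs in the transitivity part of the proof of Theorem~\ref{thm:biconn-equiv-rel}. You have merely made explicit the step the paper leaves implicit, with no gap.
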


In the following we define what it means for two SPQR-trees to be dual
isomorphic and show that they are isomorphic in that sense if and only
if they represent the same set of dual graphs.  Afterwards, we show
that testing the existence of such an isomorphism reduces to testing
graph isomorphism for planar graphs, which then solves \mpd.
Figure~\ref{fig:overview-comm-diag}(a) sketches this strategy.

\begin{figure}
  \centering
  \includegraphics[page=2]{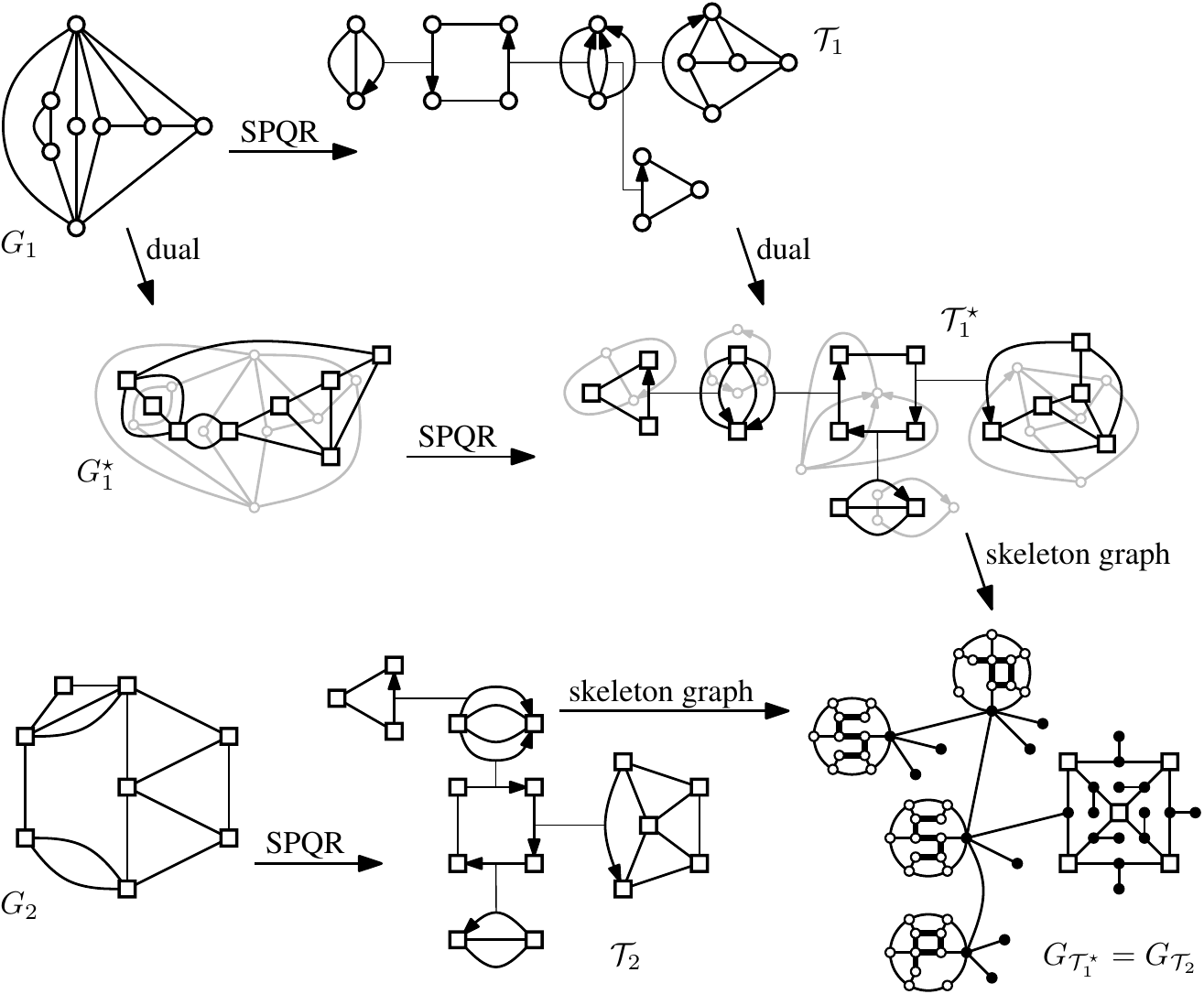}
  \caption{(a) Overview of our strategy.  (b)
    Commutative diagram illustrating Property~\ref{itm:prop-4}.}
  \label{fig:overview-comm-diag}
\end{figure}

For two graphs $G$ and $G'$ with vertices $V(G)$ and $V(G')$ and edges
$E(G)$ and $E(G')$, respectively, a map $\varphi \colon V(G)
\longrightarrow V(G')$ is a \emph{graph isomorphism} if it is
bijective and $\{u, v\} \in E(G)$ if and only if $\{\varphi(u),
\varphi(v)\} \in E(G')$.  A graph isomorphism $\varphi$ induces a
bijection between $E(G)$ and $E(G')$ and we use $\varphi(e)$ for $e
\in E(G)$ to express this bijection.  Note that we consider the edges
to be undirected, thus fixing $\varphi(\cdot)$ only for the edges does
not determine a map for the vertices.  As the SPQR-tree has more
structure than a normal tree, we require some additional properties.
A \emph{dual SPQR-tree isomorphism} between two SPQR-trees $\mathcal
T$ and $\mathcal T'$ consists of several maps.
First, a map $\varphi \colon V(\mathcal T) \longrightarrow V(\mathcal T')$
such that
\begin{compactenum}[(I)]
\item \label{itm:prop-1}$\varphi$ is a graph isomorphism between
  $\mathcal T$ and $\mathcal T'$; and
\item \label{itm:prop-2}for each node $\mu \in V(\mathcal T)$, the node
  $\varphi(\mu) \in V(\mathcal T')$ is of the same type.
  \newcounter{isomorphism-properties}
  \setcounter{isomorphism-properties}{\value{enumi}}
\end{compactenum}
Second, a map $\varphi_\mu \colon V(\skel(\mu)) \longrightarrow
V(\skel(\varphi(\mu)))$ for every R-node $\mu$ in $\mathcal T$ such
that
\begin{compactenum}[(I)]
  \setcounter{enumi}{\value{isomorphism-properties}}
\item \label{itm:prop-3}$\varphi_\mu$ is a graph isomorphism between
  $\skel(\mu)$ and $\skel(\varphi(\mu))$; and
\item \label{itm:prop-4} $\corr(\varphi_\mu(\eps)) =
  \varphi(\corr(\eps))$ holds for every virtual edge $\eps$ in
  $\skel(\mu)$.
\end{compactenum}
If there is a dual SPQR-tree isomorphism between $\mathcal T$ and
$\mathcal T'$, then we say that $\mathcal T$ and $\mathcal T'$ are
\emph{dual isomorphic}.  Note that Property~\ref{itm:prop-4} is a
quite natural requirement and one would usually require it also for
S-nodes (for P-nodes it does not make sense since every permutation is
an isomorphism on its skeleton); see
Figure~\ref{fig:overview-comm-diag}(b) for a commutative diagram
illustrating Property~\ref{itm:prop-4}.  However, not requiring it for
S-nodes has the effect that restacking their skeletons is implicitly
allowed.  As the graph isomorphisms $\varphi_\mu(\cdot)$ do not care
about the orientation of virtual edges it is also implicitly allowed
to reverse them.  These observations lead to the following lemma
showing that this definition of dual SPQR-tree isomorphism is well
suited for our purpose.

\begin{lemma}
  \label{lem:same-duals-iff-dual-isomorphic}
  Two SPQR-trees represent the same set of dual graphs if and only if
  they are dual isomorphic.
\end{lemma}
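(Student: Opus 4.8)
The plan is to prove both directions by reducing to the operational characterization of Theorem~\ref{thm:dual-spqr-tree-rep}, which states that two SPQR-trees represent the same set of dual graphs exactly when one can be obtained from the other by reversal and restacking operations (equivalently, by choosing orientations of virtual edges and restacking the skeletons of S-nodes). Thus the core task is to show that the existence of a dual SPQR-tree isomorphism is equivalent to the existence of such a sequence of operations. The guiding principle is that the maps $\varphi$ and $\varphi_\mu$ precisely encode the freedom afforded by these operations: the orientation-obliviousness of the skeleton isomorphisms corresponds to reversal, and the fact that Property~\ref{itm:prop-4} is not imposed on S-nodes corresponds to restacking.

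For the forward direction, I would assume that $\mathcal{T}$ and $\mathcal{T}'$ represent the same set of dual graphs. By Theorem~\ref{thm:dual-spqr-tree-rep} there is a sequence of reversal and restacking operations transforming $\mathcal{T}$ into $\mathcal{T}'$. Since none of these operations changes the underlying tree or the type of a node, the induced node correspondence yields a map $\varphi$ satisfying Properties~\ref{itm:prop-1} and~\ref{itm:prop-2}. For each R-node $\mu$, the operations affect $\skel(\mu)$ only by possibly reversing all of its virtual edges, so the natural vertex correspondence gives a graph isomorphism $\varphi_\mu$ as required by Property~\ref{itm:prop-3}, where orientation is irrelevant because edges are undirected. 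Finally, neither reversal nor restacking alters which neighbor a given virtual edge corresponds to, so $\corr(\varphi_\mu(\eps)) = \varphi(\corr(\eps))$ holds, establishing Property~\ref{itm:prop-4}. Hence $\mathcal{T}$ and $\mathcal{T}'$ are dual isomorphic.

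For the backward direction, I would start from a given dual SPQR-tree isomorphism and use $\varphi$ to identify the nodes of $\mathcal{T}$ with those of $\mathcal{T}'$; by Property~\ref{itm:prop-2} corresponding nodes share a type. For each R-node $\mu$ I use $\varphi_\mu$ to match the skeletons, and since $\varphi_\mu$ need not preserve orientation I first apply a reversal to align the orientations; Property~\ref{itm:prop-4} then guarantees that after this identification each virtual edge of $\skel(\mu)$ corresponds to the $\varphi$-image of its original neighbor. For each S-node I restack its skeleton so that its directed cycle agrees with that of the corresponding node in $\mathcal{T}'$, which is permitted precisely because restacking is an allowed operation and Property~\ref{itm:prop-4} is not required for S-nodes. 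For each P-node any bijection of the parallel virtual edges is already a skeleton isomorphism, so a suitable reordering suffices. After choosing orientations and performing these restackings as dictated by the isomorphism, $\mathcal{T}$ is transformed into $\mathcal{T}'$, and Theorem~\ref{thm:dual-spqr-tree-rep} yields that they represent the same set of dual graphs.

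The main obstacle is the backward direction, specifically ensuring that the local orientation and restacking choices can be made globally consistently. Reversing an R-node flips the orientation of every virtual edge in its skeleton, simultaneously affecting the gluing to each of its neighbors, so I must check that the per-twin-pair orientation freedom granted by Theorem~\ref{thm:dual-spqr-tree-rep} suffices to reconcile the orientations prescribed by the skeleton isomorphisms $\varphi_\mu$ at the R-nodes with the free orientations along edges incident to S- and P-nodes. Property~\ref{itm:prop-4} is exactly the bookkeeping that makes this work: it guarantees that the correspondence between virtual edges and neighbors commutes with $\varphi$, so the reorientations required at adjacent nodes never conflict and the two trees can be matched edge by edge.
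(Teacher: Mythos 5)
Your proposal is correct and takes essentially the same route as the paper: both directions are reduced to the operational characterization of Theorem~\ref{thm:dual-spqr-tree-rep}, with the forward direction verifying that reversal and restacking preserve Properties~\ref{itm:prop-1}--\ref{itm:prop-4}, and the backward direction reading off an orientation for each virtual edge of an R-node from $\varphi_\mu$ and a restacking of each S-node from $\varphi$, exactly as in the paper's proof. One small remark: the orientation mismatches at an R-node are resolved neither by a whole-skeleton reversal (which flips all virtual edges at once, and so cannot align an isomorphism $\varphi_\mu$ that preserves orientation on some edges and reverses it on others) nor by Property~\ref{itm:prop-4} (which only constrains $\corr$), but by the per-edge orientation freedom of the second alternative in Theorem~\ref{thm:dual-spqr-tree-rep} (rooted in Lemma~\ref{lem:reversal-of-single-edge}), which you do invoke and which makes the global-consistency worry in your last paragraph vacuous.
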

\begin{proof}
  Let $\mathcal T$ and $\mathcal T'$ be two SPQR-trees representing
  the same set of dual graphs.  By
  Theorem~\ref{thm:dual-spqr-tree-rep} this implies that they can be
  transformed into each other using reversal and restacking
  operations.  Clearly, the identity map, mapping $\mathcal T$ and
  each of its skeletons to itself, is a dual SPQR-tree isomorphism.
  It remains a dual SPQR-tree isomorphism when restacking an S-node,
  since Properties~\ref{itm:prop-1},~\ref{itm:prop-2} are independent
  from the skeletons and Properties~\ref{itm:prop-3},~\ref{itm:prop-4}
  are only required for R-nodes.  Moreover, the reversal of an R-node
  preserves Properties~\ref{itm:prop-1}--\ref{itm:prop-4} since our
  definition of graph isomorphism considers edges to be undirected.
  It follows that $\mathcal T$ and $\mathcal T'$ are dual isomorphic.

  For the opposite direction assume that $\varphi$ together with
  $\varphi_{\mu_1}, \dots, \varphi_{\mu_k}$ is a dual SPQR-tree
  isomorphism from $\mathcal T$ to $\mathcal T'$.  For every virtual
  edge $\eps$ in an R-node $\mu$ the map $\varphi_\mu$ determines
  whether the orientation of $\eps$ has to be reversed to match the
  orientation of $\varphi_\mu(\eps)$.  Moreover, how $\varphi$ maps
  the neighbors of an S-node $\mu$ to the neighbors of $\varphi(\mu)$
  determines a restacking operation transforming $\skel(\mu)$ into
  $\skel(\varphi(\mu))$.  It follows that $\mathcal T$ can be
  transformed into $\mathcal T'$ by applying restacking and reversal
  operations.  Hence, $\mathcal T$ and $\mathcal T'$ represent the
  same set of dual graphs, which concludes the proof.
\end{proof}

In the following we show how the question of whether two SPQR-trees
are dual isomorphic can be reduced to the graph isomorphism problem on
planar graphs, which can be solved in linear time~\cite{hw-ltaipg-74}.
We define the \emph{skeleton graph} $G_{\mathcal T}$ of an SPQR-tree
$\mathcal T$ as follows.  For each node $\mu$ in $\mathcal T$ there is
a subgraph $H_\mu$ in $G_{\mathcal T}$ and for each edge $\{\mu,
\mu'\}$ in $\mathcal T$ the skeleton graph contains an edge connecting
$H_\mu$ and $H_{\mu'}$.  In the following we describe the subgraphs
$H_\mu$ for the cases that $\mu$ is an S-, a P-, a Q- or an R-node and
define \emph{attachment vertices} that are incident to the edges
connecting $H_\mu$ to other subgraphs.

If $\mu$ is an S- or a P-node, the subgraph $H_\mu$ contains only one
attachment vertex $v_\mu$ and all subgraphs stemming from neighbors of
$\mu$ are attached to $v_\mu$.  To distinguish between S- and P-nodes,
small non-isomorphic subgraphs called \emph{tags} are additionally
attached to $v_\mu$, see Figure~\ref{fig:skeleton-graph}(s) and~(p),
respectively.  If $\mu$ is a Q-node, then $H_\mu$ is a single
attachment vertex, see Figure~\ref{fig:skeleton-graph}(q).  Note that
$\mu$ is a leaf in $\mathcal T$ and thus $H_\mu$ is also a leaf in
$G_{\mathcal T}$.  If $\mu$ is an R-node, $H_\mu$ is the skeleton
$\skel(\mu)$, where additionally every virtual edge $\eps$ is
subdivided by an attachment vertex $v_\eps$, see
Figure~\ref{fig:skeleton-graph}(r) for the case in which $\skel(\mu)$ is
$K_4$.  The subgraph $H_{\corr(\eps)}$ stemming from the neighbor
$\corr(\eps)$ of $\mu$ is attached to $H_\mu$ over the attachment
vertex $v_\eps$.

\begin{lemma}
  \label{lem:skeleton-graph-planar}
  The skeleton graph of an SPQR-tree is planar and can be computed in
  linear time.
\end{lemma}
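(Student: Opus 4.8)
The plan is to establish planarity through the block structure of $G_{\mathcal T}$ and then derive the linear-time bound from the linear total size of the SPQR-tree.

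First I would observe that the edges joining the gadgets are bridges. Each edge $\{\mu, \mu'\}$ of $\mathcal T$ contributes exactly one edge of $G_{\mathcal T}$ connecting $H_\mu$ and $H_{\mu'}$, and the gadgets $H_\nu$ for distinct nodes $\nu$ are pairwise vertex-disjoint. Since $\mathcal T$ is a tree, deleting the edge $\{\mu, \mu'\}$ splits $\mathcal T$ into two subtrees; correspondingly, deleting the single connecting edge in $G_{\mathcal T}$ separates the union of the gadgets on one side from those on the other. Hence every connecting edge is a bridge of $G_{\mathcal T}$, and consequently every biconnected component of $G_{\mathcal T}$ is contained in a single gadget $H_\mu$.

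Next I would verify that each gadget is planar. For an S-, P-, or Q-node the gadget consists of a single attachment vertex together with a constant-size tag, which is trivially planar. For an R-node the gadget $H_\mu$ is the skeleton $\skel(\mu)$ with each virtual edge $\eps$ subdivided by an attachment vertex $v_\eps$; since $\skel(\mu)$ is a triconnected planar graph and subdividing edges preserves planarity, $H_\mu$ is planar. Because a graph is planar if and only if each of its biconnected components is planar, and every block of $G_{\mathcal T}$ lies inside a planar gadget, it follows that $G_{\mathcal T}$ is planar.

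For the running time I would invoke the standard fact that the SPQR-tree of a biconnected planar graph has linear total size: both the number of nodes and the sum of the sizes of all skeletons are $O(n)$, where $n = |V(G)|$. Building each gadget takes time proportional to its size (a constant amount per S-, P-, or Q-node, plus one subdivision vertex per virtual edge of an R-node), and inserting the connecting edges costs a constant per tree edge. Summing over all nodes and edges yields a total of $O(n)$, so $G_{\mathcal T}$ can be constructed in linear time. The main obstacle is the planarity argument, and its key step is the bridge observation; once each connecting edge is identified as a bridge, planarity reduces cleanly to the per-gadget check, and the linear-time bound is then routine.
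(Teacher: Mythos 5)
Your proof is correct, but it establishes planarity by a genuinely different route than the paper. You argue via the block structure: every inter-gadget edge is a bridge (one edge per tree edge between vertex-disjoint connected gadgets), hence every non-trivial block lies inside a single gadget, each gadget is planar (for R-nodes because subdividing edges of the triconnected planar skeleton preserves planarity), and a graph is planar if and only if all its blocks are. The paper instead gives a constructive embedding: it roots $\mathcal T$ arbitrarily, observes that each $H_\mu$ can be embedded with the attachment vertex leading to its parent on the outer face (trivially for S-, P-, and Q-nodes, which have a single attachment vertex; for R-nodes by choosing a face incident to the relevant subdivision vertex $v_\eps$), and then places the gadgets top-down, each inside a face of its parent's gadget incident to the corresponding attachment vertex. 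Your argument is more modular and reduces to a standard fact, whereas the paper's produces an explicit planar embedding of $G_{\mathcal T}$ as a byproduct, which would matter if one wanted a drawing rather than a planarity certificate; for the lemma as stated both suffice. One small imprecision in your write-up: since each connecting edge is a bridge, it is itself a biconnected component (a $K_2$ spanning two gadgets), so it is not quite true that \emph{every} block is contained in a single gadget---but these bridge blocks are trivially planar, so the argument goes through unchanged. Your linear-time analysis matches the paper's in substance: each $H_\mu$ is built in time linear in the size of $\skel(\mu)$, the connecting edges cost constant time per tree edge, and the total skeleton size of an SPQR-tree is linear, so the construction and the size of $G_{\mathcal T}$ are both linear.
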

\begin{proof}
  Clearly, the skeleton graph $G_{\mathcal T}$ of an SPQR-tree
  $\mathcal T$ can be computed in linear time by processing each node
  $\mu$ separately to compute the subgraph $H_\mu$ consuming time
  linear in the size of $\skel(\mu)$.  Note that this implicitly shows
  that the size of $G_{\mathcal T}$ is linear.

  Let $\mathcal T$ be an SPQR-tree rooted at an arbitrary node.  The
  skeleton graph $G_{\mathcal T}$ can be embedded in a planar way by
  embedding the subgraphs corresponding to the nodes in $\mathcal T$
  top-down with respect to the chosen root.  Obviously, every subgraph
  in $G_{\mathcal T}$ corresponding to a node in $\mathcal T$ is
  planar, thus we can start by embedding the subgraph corresponding to
  the root arbitrarily.  Let $\mu$ be a non-root node in $\mathcal T$
  and let $\mu'$ be its parent.  If $\mu$ is not an R-node, $H_\mu$
  can be embedded with its only attachment vertex on the outer face.
  If $\mu$ is an R-node, $H_\mu$ can be embedded with the attachment
  vertex corresponding to the parent $\mu'$ of $\mu$ in $\mathcal T$
  on the outer face.  Thus, in any case, $H_\mu$ can be placed inside
  a face incident to the attachment vertex stemming from $\mu'$
  corresponding to $\mu$, yielding a planar drawing.
\end{proof}

\begin{figure}
  \centering
  \includegraphics{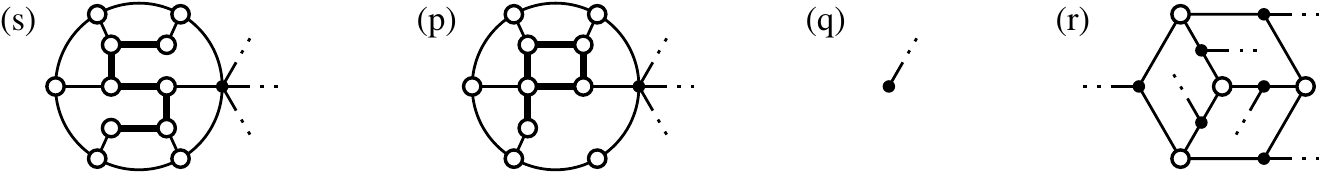}
  \caption{The subgraphs $H_\mu$ of the skeleton graph depending on
    the type of the node $\mu$.  The small black vertices are the
    attachment vertices.}
  \label{fig:skeleton-graph}
\end{figure}

\begin{lemma}
  \label{lem:dual-isomorphic-iff-skel-graphs-iso}
  Two SPQR-trees are dual isomorphic if and only if their skeleton
  graphs are isomorphic.
\end{lemma}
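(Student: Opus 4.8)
The plan is to prove the equivalence in Lemma~\ref{lem:dual-isomorphic-iff-skel-graphs-iso} by showing that a dual SPQR-tree isomorphism and a graph isomorphism of skeleton graphs carry exactly the same information, so that each can be translated into the other. The construction of the skeleton graph $G_{\mathcal T}$ was carefully designed so that the tree structure, the node types, and the correspondence $\corr$ are all encoded rigidly into the graph. Thus the strategy is to verify that a graph isomorphism $\psi\colon G_{\mathcal T} \to G_{\mathcal T'}$ is forced to respect all this encoded structure, and conversely that the four properties defining a dual SPQR-tree isomorphism suffice to build such a $\psi$.

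For the forward direction, I would start from a dual SPQR-tree isomorphism consisting of $\varphi$ and the maps $\varphi_\mu$ for R-nodes. The plan is to build $\psi$ block by block: for each node $\mu$, map $H_\mu$ to $H_{\varphi(\mu)}$. For S-, P-, and Q-nodes the subgraph $H_\mu$ has essentially no internal freedom — a single attachment vertex plus a type-distinguishing tag — so $\psi$ is determined up to the (harmless) automorphisms of the tag, and Property~\ref{itm:prop-2} guarantees $\varphi(\mu)$ has a matching tag. For R-nodes I would use $\varphi_\mu$ directly on $\skel(\mu)$, extending it to the subdivision vertices $v_\eps$ in the obvious way. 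The key point is that the edges of $G_{\mathcal T}$ connecting the blocks correspond exactly to the edges of $\mathcal T$, so Property~\ref{itm:prop-1} makes these edges match up; and for an R-node, Property~\ref{itm:prop-4}, $\corr(\varphi_\mu(\eps)) = \varphi(\corr(\eps))$, is precisely what guarantees that the attachment vertex $v_\eps$ is sent to the attachment vertex on which the correct neighboring block $H_{\varphi(\corr(\eps))}$ hangs. Checking that these local maps glue into a global graph isomorphism is then routine.

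For the converse, I would begin with a graph isomorphism $\psi$ of skeleton graphs and recover all four maps. The first task is to observe that $\psi$ must respect the block structure: each $H_\mu$ is identifiable inside $G_{\mathcal T}$ (for instance, the triconnected pieces coming from R-nodes, the tagged cut vertices for S- and P-nodes, and the leaf vertices for Q-nodes are combinatorially distinguishable), so $\psi$ induces a well-defined map $\varphi$ on nodes. I would then argue $\varphi$ is a graph isomorphism of the trees (Property~\ref{itm:prop-1}) because the connecting edges are preserved, and that it is type-preserving (Property~\ref{itm:prop-2}) because the tags and the shapes of the blocks are non-isomorphic across the different node types. For each R-node I read off $\varphi_\mu$ as the restriction of $\psi$ to the original (non-subdivision) vertices of $\skel(\mu)$, obtaining Property~\ref{itm:prop-3}; and Property~\ref{itm:prop-4} falls out because $\psi$ maps each subdivision vertex $v_\eps$ to the subdivision vertex carrying the image neighbor, which is exactly the statement $\corr(\varphi_\mu(\eps)) = \varphi(\corr(\eps))$.

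The main obstacle I expect is the rigidity argument in the converse direction: one must be sure that a graph isomorphism of the skeleton graphs cannot map pieces across node-type boundaries or shuffle the subdivided virtual edges of an R-node in a way that violates $\corr$. This rests entirely on the design of the tags and the subdivision vertices, so the crux is to verify that the tags attached to S- and P-nodes are genuinely non-isomorphic and distinguishable from all other local structure, and that subdividing every virtual edge in an R-node makes the attachment vertices canonically recognizable and correctly matched. Once this structural rigidity is established, both directions reduce to bookkeeping, and together with Lemma~\ref{lem:skeleton-graph-planar} (planarity and linear-time computability of $G_{\mathcal T}$) this yields the desired reduction of dual isomorphism to planar graph isomorphism.
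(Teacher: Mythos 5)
Your proposal is correct and follows essentially the same route as the paper's proof: in the forward direction you assemble the skeleton-graph isomorphism block by block, with Property~\ref{itm:prop-1} handling connecting edges and Property~\ref{itm:prop-4} fixing the attachment vertices of R-node subgraphs, and in the converse you exploit the rigidity of the block structure (tags for S/P, leaves for Q, degree/shape for R, subdivision vertices for $\corr$) to read off $\varphi$ and the $\varphi_\mu$ exactly as the paper does. The details you defer (the tags being non-isomorphic, attachment vertices being the cutvertices, original vertices mapping to original vertices in R-node blocks) are the same routine checks the paper carries out, so nothing essential is missing.
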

\begin{proof}
  Let $\mathcal T$ and $\mathcal T'$ be two SPQR-trees and let
  $\varphi$ together with $\varphi_{\mu_1}, \dots, \varphi_{\mu_k}$ be
  a dual SPQR-tree isomorphism between them.  We show how this induces
  a graph isomorphism $\varphi_G$ between the skeleton graphs
  $G_{\mathcal T}$ and $G_{\mathcal T'}$.  If $\mu$ is an S-, P- or
  Q-node, then its corresponding subgraph in $H_\mu$ only contains a
  single attachment vertex $v_\mu$.  Since $\varphi(\mu)$ is of the
  same type (due to Property~\ref{itm:prop-2} of dual SPQR-tree
  isomorphisms), the subgraph $H_{\varphi(\mu)}$ also contains a
  single attachment vertex $v_{\varphi(\mu)}$ and we set
  $\varphi_G(v_\mu) = v_{\varphi(\mu)}$.  For S- and P-nodes we
  additionally simply map their tags isomorphically to one another.
  For the case that $\mu$ is an R-node, the map $\varphi_\mu$ is a
  graph isomorphism between $\skel(\mu)$ and $\skel(\varphi(\mu))$
  (Property~\ref{itm:prop-3}).  Thus, it induces a graph isomorphism
  between $H_\mu$ and $H_{\varphi(\mu)}$ since these subgraphs are
  obtained from $\skel(\mu)$ and $\skel(\varphi(\mu))$, respectively,
  by subdividing each virtual edge.  It remains to show that
  $\varphi_G$ respects the edges between attachment vertices of
  different subgraphs.  Since $\varphi$ is a graph isomorphism
  (Property~\ref{itm:prop-1}), attachment vertices of two subgraphs of
  $G_{\mathcal T}$ are connected if and only if the corresponding
  subgraphs in $G_{\mathcal T'}$ are connected.  Moreover,
  Property~\ref{itm:prop-4} ensures that for a subgraph stemming from
  an R-node the right attachment vertices are chosen (for other nodes
  this is clear since their subgraphs have unique attachment
  vertices).
  
  For the opposite direction, assume $\varphi_G$ is a graph
  isomorphism between $G_{\mathcal T}$ and $G_{\mathcal T'}$.  Let
  $H_\mu$ be the subgraph stemming from a node $\mu$ in $\mathcal
  T$.  As $H_\mu$ is a block in $G_{\mathcal T}$ (or a leaf if $\mu$
  is a Q-node), it has to be mapped to a block in $G_{\mathcal T'}$.
  As all edges in $G_{\mathcal T'}$ connecting attachment vertices of
  subgraphs stemming from different nodes are bridges, all vertices in
  $H_\mu$ have to be mapped to vertices in $H_{\mu'}$ for some node
  $\mu'$ in $\mathcal T'$.  This defines the map $\varphi$ by setting
  $\varphi(\mu) = \mu'$.  Clearly, $\varphi$ is a graph isomorphism
  between $\mathcal T$ and $\mathcal T'$, since two subgraphs in a
  skeleton graph are connected by an edge if and only if the
  corresponding nodes in its SPQR-tree are adjacent, thus, $\varphi$
  satisfies Property~\ref{itm:prop-1}.  Since the only leaves in a
  skeleton graph stem from Q-nodes, $\varphi(\mu)$ is a Q-node if and
  only if $\mu$ is a Q-node.  Let $v$ be an attachment vertex stemming
  from an inner node $\mu$ in $\mathcal T$.  Then $v$ is a cutvertex
  and, since every cutvertex in a skeleton graph is an attachment
  vertex, $\varphi_G(v)$ is also an attachment vertex in $G_{\mathcal
    T'}$.  The vertex $v$ has degree~3 if and only if $\mu$ is an
  R-node, thus $\varphi$ maps R-nodes to R-nodes.  Moreover, if $\mu$
  is an S-node, $v$ cannot be mapped to an attachment vertex stemming
  from a P-node, since the tags attached to S- and P-nodes are not
  isomorphic.  Hence, $\varphi$ maps S- and P-nodes to S- and P-nodes,
  respectively, and thus satisfies Property~\ref{itm:prop-2}.
  
  To obtain a dual SPQR-tree isomorphism, it remains to define a map
  $\varphi_\mu$ for each R-node $\mu$ in $\mathcal T$ that satisfies
  Properties~\ref{itm:prop-3} and~\ref{itm:prop-4}.  As observed
  before, $\varphi_G$ defines a bijection between the vertices in the
  subgraph $H_\mu$ stemming from $\mu$ and the vertices in
  $H_{\varphi(\mu)}$ stemming from $\varphi(\mu)$.  As $H_\mu$ and
  $H_{\varphi(\mu)}$ are the skeletons $\skel(\mu)$ and
  $\skel(\varphi(\mu))$ (with a subdivision vertex on each virtual
  edge), $\varphi_G$ defines a graph isomorphism $\varphi_\mu$ between
  $\skel(\mu)$ and $\skel(\varphi(\mu))$ (satisfying
  Property~\ref{itm:prop-3}).  To show that Property~\ref{itm:prop-4}
  holds consider a virtual edge $\eps$ in $\skel(\mu)$ and the
  attachment vertex $v_\eps$ in $H_\mu$ stemming from it.  Let further
  denote $v_{\corr(\eps)}$ the attachment vertex in $H_{\corr(\eps)}$
  such that $G_{\mathcal T}$ contains the edge $\{v_\eps,
  v_{\corr(\eps)}\}$.  Then $\varphi_G$ maps $\{v_\eps,
  v_{\corr(\eps)}\}$ to an edge $\{\varphi_G(v_\eps),
  \varphi_G(v_{\corr(\eps)})\}$ in $G_{\mathcal T'}$.  Since
  $\varphi_G(v_\eps) = v_{\varphi_\mu(\eps)}$ holds by the definition
  of $\varphi_\mu$, and $\varphi_G(v_{\corr(\eps)})$ stems from the
  node $\varphi(\corr(\eps))$ by the definition of $\varphi$, we have
  that $\corr(\varphi_\mu(\eps)) = \varphi(\corr(\eps))$ by the
  definition of the skeleton graph $G_{\mathcal T'}$.  As this
  establishes Property~\ref{itm:prop-4}, it concludes the proof.
\end{proof}

\begin{figure}[tb]
  \centering
  \includegraphics[page=1]{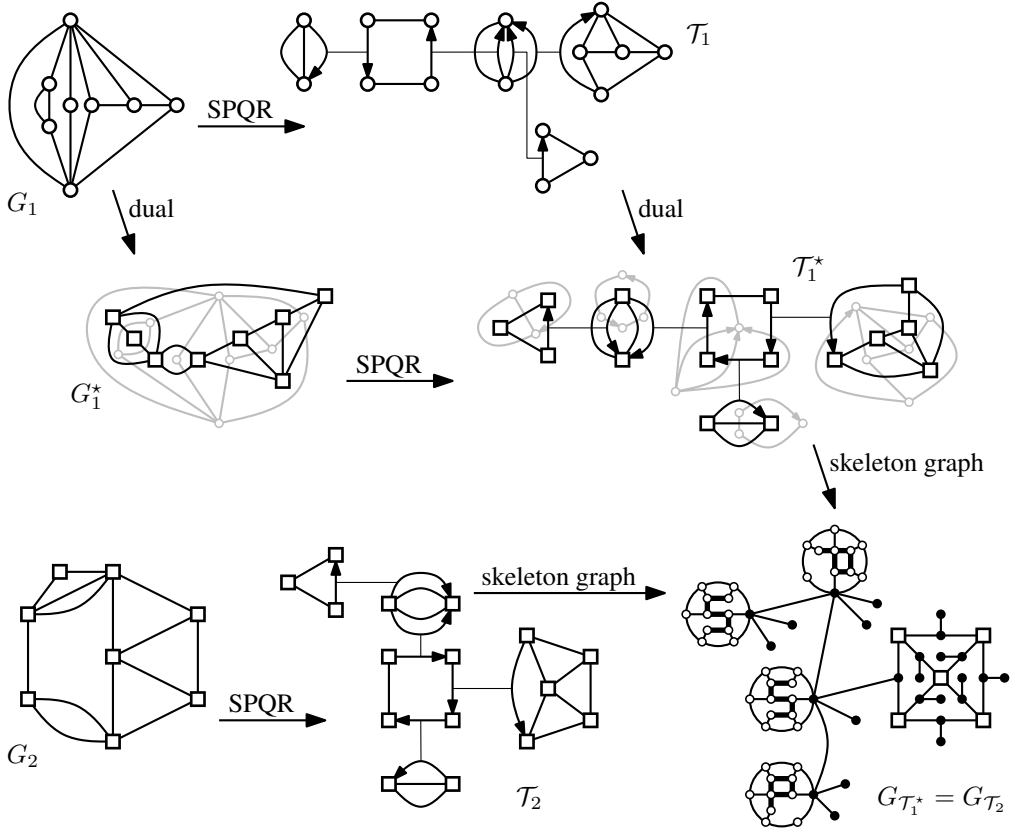}
  \caption{First building the dual graph $G_1^\star$ of $G_1$ (with
    respect to a fixed embedding) and then building its SPQR-tree or
    first building its SPQR-tree $\mathcal T_1$ and then its dual
    SPQR-tree yields the same tree $\mathcal T_1^\star$
    (Lemma~\ref{lem:dual-spqr-is-spqr-of-dual}).  The graphs $G_1$ and
    $G_2$ are dual to each other (with respect to at least one pair of
    embeddings) if and only if $\mathcal T_1^\star$ and $\mathcal T_2$
    represent the same set of duals
    (Corollary~\ref{cor:mpd-reduces-to-same-duals-of-spqr}), which is
    the case if and only if their skeleton graphs $G_{\mathcal
      T_1^\star}$ and $G_{\mathcal T_2}$ are isomorphic
    (Lemma~\ref{lem:same-duals-iff-dual-isomorphic} and
    Lemma~\ref{lem:dual-isomorphic-iff-skel-graphs-iso}).}
  \label{fig:overview}
\end{figure}

\begin{theorem}
  \mpd can be solved in linear time for biconnected planar graphs.

\end{theorem}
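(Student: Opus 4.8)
The plan is to chain together the reductions established in this section so that \mpd becomes a single instance of planar graph isomorphism, and then to verify that every step runs in linear time. Concretely, given the two biconnected planar input graphs $G_1$ and $G_2$, I would first compute their SPQR-trees $\mathcal T_1$ and $\mathcal T_2$, then transform $\mathcal T_1$ into the dual SPQR-tree $\mathcal T_1^\star$ as described in Section~\ref{sec:succ-repr-all-duals} (dualizing each skeleton and reorienting the virtual edges from right to left), then build the two skeleton graphs $G_{\mathcal T_1^\star}$ and $G_{\mathcal T_2}$, and finally test these two planar graphs for isomorphism. The algorithm accepts if and only if the isomorphism test succeeds.

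For correctness, I would argue along the chain sketched in Figure~\ref{fig:overview}. By Corollary~\ref{cor:mpd-reduces-to-same-duals-of-spqr}, there is an embedding $\mathcal G_1$ of $G_1$ whose dual is isomorphic to $G_2$ if and only if $\mathcal T_2$ and $\mathcal T_1^\star$ represent the same set of dual graphs. By Lemma~\ref{lem:same-duals-iff-dual-isomorphic} this holds if and only if $\mathcal T_2$ and $\mathcal T_1^\star$ are dual isomorphic, and by Lemma~\ref{lem:dual-isomorphic-iff-skel-graphs-iso} this in turn holds if and only if their skeleton graphs $G_{\mathcal T_2}$ and $G_{\mathcal T_1^\star}$ are isomorphic as abstract graphs. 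Hence the isomorphism test on the skeleton graphs decides \mpd exactly.

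For the running time, I would check each step in turn. The SPQR-tree of a biconnected planar graph can be computed in linear time~\cite{gm-lti-00}, and since biconnectivity is preserved under dualization~\cite{t-cm-66}, both $G_2$ and every dual of $G_1$ admit SPQR-trees, so the machinery applies. Constructing $\mathcal T_1^\star$ from $\mathcal T_1$ is linear: each skeleton is replaced by its directed dual, which for S-, P-, and Q-nodes is immediate, and for an R-node amounts to dualizing a triconnected planar skeleton with respect to its unique planar embedding~\cite{w-cgcg-32}; computing this dual takes time linear in the skeleton size, and summed over all nodes this is linear in $|G_1|$. By Lemma~\ref{lem:skeleton-graph-planar} the two skeleton graphs are planar, have linear size, and can be built in linear time. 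Finally, isomorphism of two planar graphs can be tested in linear time by the algorithm of Hopcroft and Wong~\cite{hw-ltaipg-74}. Composing these bounds yields an overall linear-time algorithm.

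The one step that is not simply a quotation of a cited result is the linear-time construction of the dual SPQR-tree, so I expect that to require the most care: one must confirm that dualizing the triconnected R-node skeletons (which needs their unique embeddings) together with the global reorientation of twin virtual edges can be carried out in time proportional to the total skeleton size, which is linear in the input. Everything else is a direct appeal to the lemmas of this section combined with the linear-time planar isomorphism test, so no further obstacle arises. In particular, setting $G_1 = G_2 = G$ shows that {\sc Graph Self-Duality} for biconnected planar graphs is decided within the same bound.
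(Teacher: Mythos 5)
Your proposal is correct and follows essentially the same route as the paper's proof: reduce \mpd via Corollary~\ref{cor:mpd-reduces-to-same-duals-of-spqr} to testing whether $\mathcal T_1^\star$ and $\mathcal T_2$ represent the same set of duals, pass through Lemma~\ref{lem:same-duals-iff-dual-isomorphic} and Lemma~\ref{lem:dual-isomorphic-iff-skel-graphs-iso} to an isomorphism test on the planar, linear-size skeleton graphs (Lemma~\ref{lem:skeleton-graph-planar}), and invoke Hopcroft and Wong. Your extra remark that dualizing the R-node skeletons uses their unique embeddings and stays linear in total skeleton size is a detail the paper leaves implicit, but it introduces no deviation from the paper's argument.
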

\begin{proof}
  See Figure~\ref{fig:overview} for an example.
  Corollary~\ref{cor:mpd-reduces-to-same-duals-of-spqr} states that
  \mpd can be solved by testing whether two SPQR-trees (that can be
  computed in time linear in the size of the input
  graphs~\cite{gm-lti-00}) represent the same set of dual graphs.  By
  Lemma~\ref{lem:same-duals-iff-dual-isomorphic} it is equivalent to
  test whether these two SPQR-trees are dual isomorphic, which can be
  done by testing whether their skeleton graphs are isomorphic, due to
  Lemma~\ref{lem:dual-isomorphic-iff-skel-graphs-iso}.  The skeleton
  graph of an SPQR-tree is planar and has linear size, see
  Lemma~\ref{lem:skeleton-graph-planar}.  Hence, we can use the linear
  time algorithm for testing whether two planar graphs are isomorphic
  by Hopcroft and Wong~\cite{hw-ltaipg-74} yielding a linear time
  algorithm solving \mpd.
\end{proof}


\begin{corollary}
  \label{cor:self-dual-biconn}
  {\sc Graph Self-Duality} can be solved in linear time for
  biconnected planar graphs.

\end{corollary}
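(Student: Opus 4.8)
The plan is to observe that {\sc Graph Self-Duality} is nothing but the restriction of \mpd to instances in which the two input graphs coincide. Indeed, setting $G_1 = G_2 = G$ in the definition of \mpd asks exactly whether $G$ admits an embedding whose dual is isomorphic to $G$, which is precisely the definition of $G$ being graph self-dual. Since the preceding theorem solves \mpd in linear time for biconnected planar graphs, applying it to the pair $(G,G)$ immediately decides {\sc Graph Self-Duality} in linear time whenever $G$ is biconnected.

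Unwinding the machinery, the algorithm would proceed as follows. First I would compute the SPQR-tree $\mathcal T$ of $G$ together with the dual SPQR-tree $\mathcal T^\star$ with respect to an arbitrary fixed embedding; by Lemma~\ref{lem:dual-spqr-is-spqr-of-dual} the latter is, after reorienting twin edges, the SPQR-tree of a dual of $G$. By Corollary~\ref{cor:mpd-reduces-to-same-duals-of-spqr} the pair $(G,G)$ is a {\sc Yes}-instance of \mpd precisely when $\mathcal T$ and $\mathcal T^\star$ represent the same set of dual graphs, and by Lemma~\ref{lem:same-duals-iff-dual-isomorphic} this holds if and only if $\mathcal T$ and $\mathcal T^\star$ are dual isomorphic. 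Finally, by Lemma~\ref{lem:dual-isomorphic-iff-skel-graphs-iso} together with Lemma~\ref{lem:skeleton-graph-planar}, dual isomorphism reduces to testing isomorphism of the two planar skeleton graphs $G_{\mathcal T}$ and $G_{\mathcal T^\star}$, which can be carried out in linear time using the algorithm of Hopcroft and Wong~\cite{hw-ltaipg-74}.

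I expect essentially no obstacle, since every ingredient has already been established and the corollary is a pure specialization of \mpd. The only point worth checking is that nothing degenerates when the two input graphs are identical; but the reduction above uses $G$ only to build $\mathcal T$ and $\mathcal T^\star$, and biconnectivity of $G$ is invariant under dualization (as noted in the introduction), so all structural results on SPQR-trees apply verbatim. Hence the linear-time bound carries over directly, and the corollary follows.
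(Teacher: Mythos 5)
Your proposal is correct and matches the paper's (implicit) argument exactly: the paper proves this corollary simply by observing that {\sc Graph Self-Duality} is the special case of \mpd obtained by setting $G_1 = G_2 = G$ and invoking the linear-time \mpd algorithm for biconnected planar graphs. Your unwinding of the machinery (via Corollary~\ref{cor:mpd-reduces-to-same-duals-of-spqr}, Lemmas~\ref{lem:same-duals-iff-dual-isomorphic}, \ref{lem:dual-isomorphic-iff-skel-graphs-iso}, and~\ref{lem:skeleton-graph-planar}, and Hopcroft--Wong) is a faithful restatement of that algorithm and requires no further justification.
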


\section{Conclusion}
\label{sec:conclusion}

In this paper we defined and studied the problem \mpd of testing whether,
given two graphs $\1G$ and $\2G$, there exists an embedding of $\1G$ such
that the corresponding dual graph is isomorphic to $\2G$. We proved that
\mpd is NP-complete in the general case, while it is solvable
in polynomial (actually linear) time for biconnected planar graphs.

The interest on this problem is twofold. On one hand, it represents a
new step in the fundamental theory of planar graphs isomorphism, also
testified by the fact that, as a side effect, it provides the same
results for the well-known problem of testing {\sc Graph
  Self-Duality}~\cite{sc-csdg-92,ar-ccsdsp-92}. On the other hand, it
could be seen as a single example among a pletora of problems that
require to find a dual graph of $\1G$ satisfying certain
properties. In this direction, we believe that the definition of the
new data-structure \emph{dual SPQR-tree} and of the operations that
can be applied on it to efficiently handle all the duals of a
biconnected planar graph could be considered as a main result of this
paper, independently of its application to solve \mpd, since it could
potentially be used to tackle many other problems of the same type.

As remarked above, the results we obtained on \mpd can be extended to
{\sc Graph Self-Duality}, asking whether a given graph $G$ can be
embedded in such a way that the corresponding dual is isomorphic to $G$. 
The restricted version {\sc Map Self-Duality}~\cite{ss-sdg-96}
of {\sc Graph Self-Duality} requires the embedding of $G$ to be preserved
in the isomorphism with the corresponding dual. We could prove that the
NP-completeness result for \mpd extends to {\sc Map Self-Duality}, but we
could not prove the same for the polynomial-time testing algorithm. Hence,
we leave as an open problem the question whether {\sc Map Self-Duality} can
be solved efficiently for biconnected planar graphs.

\clearpage
\newpage

\bibliographystyle{plain}
\bibliography{dual_embedding}

\end{document}